\documentclass{article}

\pdfoutput=1


     \usepackage[preprint]{neurips_2024}



\usepackage[utf8]{inputenc} 
\usepackage[T1]{fontenc}    
\usepackage{hyperref}       
\usepackage{url}            
\usepackage{booktabs}       
\usepackage{amsfonts}       
\usepackage{nicefrac}       
\usepackage{microtype}      
\usepackage{xcolor}         

\usepackage{graphicx} 
\usepackage{bm} 
\usepackage{lipsum} 
\usepackage{amsmath, amssymb,amsthm}
\usepackage{amssymb}
\usepackage{mathrsfs}
\usepackage{appendix}
\usepackage{epigraph}
\usepackage{algorithm,algorithmic}
\usepackage{natbib}
\setcitestyle{numbers,square}

\theoremstyle{definition}
\newtheorem{theorem}{\noindent Theorem}[section]
\newtheorem{lemma}[theorem]{\noindent Lemma}

\newtheorem{corollary}[theorem]{\noindent Corollary}
\newtheorem{definition}{\noindent Definition}[section]
\newtheorem{example}{\noindent Example}[section]
\newtheorem{remark}{\noindent Remark}[section]

\newtheorem{assumption}{\noindent Assumption}[section]

\numberwithin{equation}{section}
\numberwithin{figure}{section}
\DeclareMathOperator*{\argmax}{argmax} 

\def\advantage{\mathrm{adv}}

\title{A Payoff-Based Policy Gradient Method in Stochastic Games with Long-Run Average Payoffs}
%

\author{%
  Junyue Zhang  \\
  Academy of Mathematics and Systems Science\\
  Chinese Academy of Sciences\\
  Beijing, BJ 100190 \\
  \texttt{zhangjunyue@amss.ac.cn} \\
  \And
  Yifen Mu  \\
  Academy of Mathematics and Systems Science\\
  Chinese Academy of Sciences\\
  Beijing, BJ 100190 \\
  \texttt{mu@amss.ac.cn} \\
}

\begin{document}

\maketitle

\begin{abstract}
    Despite the significant potential for various applications, stochastic games with long-run average payoffs have received limited scholarly attention, particularly concerning the development of learning algorithms for them due to the challenges of mathematical analysis. In this paper, we study the stochastic games with long-run average payoffs and present an equivalent formulation for individual payoff gradients by defining advantage functions which will be proved to be bounded. This discovery allows us to demonstrate that the individual payoff gradient function is Lipschitz continuous with respect to the policy profile and that the value function of the games exhibits the gradient dominance property. Leveraging these insights, we devise a payoff-based gradient estimation approach and integrate it with the Regularized Robbins-Monro method from stochastic approximation theory to construct a bandit learning algorithm suited for stochastic games with long-run average payoffs. Additionally, we prove that if all players adopt our algorithm, the policy profile employed will asymptotically converge to a Nash equilibrium with probability one, provided that all Nash equilibria are globally neutrally stable and a globally variationally stable Nash equilibrium exists. This condition represents a wide class of games, including monotone games.
\end{abstract}

\section{Introduction}
\par Ever since they were proposed by Shapley \cite{Shapley} in the 1950's, stochastic games have been extensively studied with a large amount of applications in fields such as multi-agent reinforcement learning \cite{Multiagentlearning}, robotics \cite{robotics}, autonomous driving \cite{driving}. Unlike static games, in stochastic game settings, the games will be played in many rounds. Before choosing their actions, players can know the current state which determines the rules of the game for that stage in advance. After the players act in this round, they will receive their own instantaneous rewards. At the same time, the state of the game will move to the next state based on the transition probabilities induced by the current state and the players' actions. Therefore, in contrast to matrix games, in stochastic games, each player has to balance two objectives: maximizing the one-step gains or optimizing the long-run payoffs.

\par In general, there are four classic long-run payoff models in stochastic games \cite{Stochasticgamessolan}: total payoffs in finite horizon, total payoffs in random stopping time frameworks, discounted payoffs in infinite horizon, and average payoffs in infinite horizon. Regardless of the specific long-run payoff function utilized, these functions are significantly influenced by the transition probabilities that are a direct consequence of the players' strategic choices. As a result, they often exhibit a considerable degree of nonlinearity, even when the action space available to players in each state is finite. Therefore, some well-known algorithms such as fictitious play \cite{FP} and no-regret learning \cite{ASimpleAdaptiveProcedureLeadingtoCorrelatedEquilibrium}, which are widely used in matrix games, cannot be simply applied to learning in stochastic games.

\par However, as demonstrated in \cite{deng_complexity_2022}, calculating a Nash equilibrium for stochastic games with discounted payoffs is classified as PPAD-Complete, a complexity equivalent to that of matrix games \cite{PPADhardfinite}. This parallel suggests that it may be possible to adapt learning algorithms for finding Nash equilibria in some special stochastic games, or correlated equilibria in all such games, analogous to methods employed in matrix games. The literature on this topic is extensive; however, the majority of studies have concentrated on finite horizon payoffs \cite{Vlearning, WhenCanWeLearnGeneralSumMarkovGames} or discounted payoffs in infinite horizons \cite{Nashq, GradientPlayinStochasticGamesStationaryPointsandLocalGeometry}, with less emphasis on random stopping time payoffs \cite{OntheconvergenceofpolicygradientmethodstoNashequilibriaingeneralstochasticgames} and average payoffs over infinite horizons \cite{LearningStationaryNashEquilibriumPoliciesin}. Nevertheless, as highlighted in \cite{ReinforcementLearningAnIntroduction}, the first two models possess inherent limitations. In certain scenarios, the model of average payoffs in infinite horizon may more accurately reflect real-world conditions, particularly when players are required to take actions at a high frequency within a short time frame.

\par In our study, we will focus on stochastic games with long-run average payoffs, and the main contributions of our work are as follows:
\begin{enumerate}
    \item We extend the concept of advantage functions from reinforcement learning\cite{Policygradientmethodsforreinforcementlearningwithfunctionapproximation} to stochastic games with long-run average payoffs, and prove that it is bounded and well-defined, thereby laying a solid foundation for further analysis.
    \item We prove that the individual payoff gradients in stochastic games with long-run average payoffs are Lipschitz continuous, as researchers have done for the other three types of stochastic games \cite{daskalakis2021independent}\cite{PolicygradientmethodsfindtheNashequilibriuminNplayergeneralsumlinearquadraticgames}\cite{Onlineconvexoptimizationinthebanditsettinggradientdescentwithoutagradient} \cite{GradientPlayinStochasticGamesStationaryPointsandLocalGeometry}. Furthermore, the value functions possess gradient dominance property, so in stochastic games with aforementioned payoffs, all first-order stationary policies are Nash equilibrium, vice verse.
    \item Capitalizing on our observations, we develop a payoff-based gradient estimation approach inspired by simultaneous perturbation stochastic approximation method \cite{BanditlearninginconcaveN-persongames} and integrate it with the Robbins-Monro method\cite{RobbinsMonro} and the mirror descent algorithm \cite{Beck2003MirrorDA} to construct a bandit learning algorithm suited for stochastic games with long-run average payoffs. Our algorithm is distributed, relatively simple, and can be applied to lots of games. What the players only need is instantaneous rewards they receive in games.
    \item We prove that using our algorithm, the learning process will converge to a Nash equilibrium with probability $1$ if all Nash equilibria are globally neutrally stable and a globally variationally stable Nash equilibrium exists.
\end{enumerate}

\par The paper is organized as follows. In Section $2$, we provide an overview of the fundamental concepts associated with stochastic games. In Section $3$, we analyse the properties of the value function and demonstrate that the individual payoff gradients are Lipschtiz continuous. In Section $4$, we introduce a payoff-based methods for estimating the individual payoff gradients, and present a mirror descent algorithm for learning the Nash equilibrium. Our algorithm is derived based on these methodologies. In Section $5$, we introduce the concept of stability in stochastic games including neutrally stable and variationally stable, and show the convergence of the progress induced by our algorithm. Discussions are given in Section $6$.

\section{Problem Setup and Preliminaries}
\par Throughout this paper, we focus on stochastic games involving a finite set of players, denoted by $N = \{1, 2, \ldots, n\}$, and a finite set of states $S$. Each player $i \in N$ possesses a finite set of actions $A_i$ available in every state $s \in S$. The Cartesian product $A = \prod_{i \in N} A_i$ represents the set of all possible joint actions, while $A_{-i} = \prod_{j \neq i} A_j$ signifies the set of all joint actions except player $i$. Upon reaching a state $s$, player $i$ chooses an action $a_{i} \in A_i$ and receives an immediate reward $r_i(a_i, a_{-i})$. Subsequently, the game transitions to a new state $s' \in S$ with probability $\mathbb{P}[s' | s, a]$, where $a$ represents the joint action $(a_1, \dots, a_n)$. Formally, we can use a tuple $\mathcal{G} = (S, N, (A_i)_{i \in N}, \mathbb{P}, (r_i)_{i \in N})$ to denote such a stochastic game.

\par The game will be played repeatedly as follows. At any discrete time $t = 0,1,2,\dots$, all players observe their current state $s^t$ and choose her action $a^t_i$ from $A_i$. Then they receive their immediate reward $r_i(s^t, a^t)$, denoting the joint action as $a^t = (a^t_i)_{i \in N}$. After that, the state of the game will move to $s^{t+1}$ according to the transition probability $\mathbb{P}(s^{t+1}| s^t, a^t)$, and players choose their actions in the next state $s^{t+1}$.

\par At each discrete time step $t$, for any player $i$, we define the history $\mathcal{H}^t_i$ as the collection of all information available to player $i$, encompassing their realized states, actions, and rewards. This history is formally represented as the set
\[
    \mathcal{H}^t_i = \{ s^l, a^l_i, r_i(s^l, a^l) : l = 0, 1, \ldots, t - 1 \} \cup \{ s^t \}
\]
Players will determine their actions at time $t$ based on $\mathcal{H}^t_i$. A general policy for player $i$ is characterized by a mapping $\pi_i: \mathcal{H} \to \Delta(A_i)$, where $\mathcal{H}$ is the set of all histories and $\Delta(A_i)$ represents the set of all probability distributions over the set $A_i$. Upon observing the history $\mathcal{H}^t_i$, player $i$ will execute the mixed strategy $\pi_i(\mathcal{H}^t_i)$. However, employing a policy contingent on the history can be intricate. In practice, there is a preference for more straightforward policies, particularly stationary policies, which are the focus of this paper and are defined subsequently.

\begin{definition}[Stationary policy]
    A policy $\pi_i$ for player $i$ is stationary if it is solely dependent on the current state, i.e., for any history $\mathcal{H}^t_i$, we have $\pi_i(\mathcal{H}^t_i) = \pi_i(s^t)$. And we will use $\pi_i(a_i| s)$ to denote the probability of player $i$ taking action $a_i$ in state $s$. Concurrently, we use $\Pi_i$ to denote the set of all stationary policies of player $i$. Furthermore, the set of all stationary policy profiles is denoted by $\Pi = \prod_{i \in N} \Pi_i$.
\end{definition}

\par Here we note that the stationary policy set $\Pi_i$ can be represented as a subset of $
\mathbb{R}^{|S| \times |A_i|}$. More precisely, we have
\begin{align*}
    \Pi_i = \{ (\pi_i(a_i|s))_{(a_i,s) \in A_i \times S}: \sum_{a_i \in A_i} \pi_i(a_i|s) = 1 ,\forall s \in S, \pi_i(a_i|s) \geqslant 0, \forall (a_i,s) \in A_i \times S \}.
\end{align*}

\par Given a initial state $s^0$ and a stationary policy profile $\pi$, a Markov chain on the set of states can be naturally induced. The $(s,s')$-element of the transition matrix $P_{\pi}$ is
\begin{equation}\label{transition}
    P_{\pi}(s' | s) = \sum_{a \in A} \mathbb{P}[s' | s,a] (\prod_{i=1}^{n} \pi_i(a_i|s)).
\end{equation}
Thus, assuming that the Markov chain $P_{\pi}$ is ergodic, if we let $p_{\pi}$ be the unique stationary distribution of $P_{\pi}$, we have $\lim_{t \to \infty} \mathbb{P}(s^t = s |s^0 = s_0, \pi) = p_{\pi}(s)$. And for the convenience of our analysis, we will use the following standard assumption used in the MDP literature \cite{OnlineMarkovDecisionProcesses}.
\begin{assumption}\label{assumptionintransition}
    For any stationary policy profile $\pi$ chosen by the players, the induced Markov chain $P_{\pi}$ is ergodic, and its mixing time is uniformly bounded above by a constant $\tau > 0$, that is
    \begin{align*}
        \| (w-w') P_{\pi} \|_1 \leqslant e^{-\frac{1}{\tau}} \| w-w' \|_1,\quad  \forall i \in N, \pi \in \Pi, w,w' \in \Delta(S).
    \end{align*}
\end{assumption}
This is an assumption that holds in a wide array of scenarios. For instance, if there exists a constant $\varepsilon > 0$ such that for all state $s$, $s'$ and for all joint action $a$, the transition probability satisfies $\mathbb{P}[s' | s, a] > \varepsilon$. Under these conditions, for any policy profile $\pi$, the induced transition matrix $P_{\pi}$ ensures that $P_{\pi}(s'|s) > \varepsilon$. It can be readily demonstrated that Assumption \ref{assumptionintransition} is valid in this context.
\par The objective for each player $i$ in stochastic games with long-run average payoffs is to choose a stationary policy $\pi_i$ to maximize her expected long-run average payoffs, which is given by
\begin{align}
    V_i(\pi_i, \pi_{-i}) = \mathbb{E}\bigg[\lim_{T\to \infty} \frac{1}{T} \sum^{T}_{t=0} r_i(s^t,a^t)\bigg].
\end{align}
With Assumption \ref{assumptionintransition}, we have
\begin{align*}
    V_i(\pi_i,\pi_{-i}) = \sum_{s \in S} p_{\pi_i, \pi_{-i}}(s) \sum_{a} (\prod_{i=1}^{n} \pi_i(a_i|s)) r_i(s,a_i,a_{-i}),
\end{align*}
which shows that $V_i(\pi_i, \pi_{-i})$ is well-defined, i.e., convergent if all players take stationary policies.
\par Now we can define Nash equilibrium in stochastic games with long-run average payoff as follows.
\begin{definition}[Nash equilibrium]
    A policy profile $\pi^* = (\pi^*_i)_{i \in N}$ is a Nash equilibrium for the stochastic game $\mathcal{G}$ with long-run average payoffs if for any player $i$ and any stationary policy $\pi_i$,
    \begin{align*}
        V_i(\pi^*_i, \pi^*_{-i}) \geqslant V_i(\pi_i, \pi^*_{-i}).
    \end{align*}
\end{definition}

\section{Properties of Individual Payoff Functions}
\label{PropertiesofIndividualPayoffFunctions}

\par In this section, we analyze the properties of the long-run average payoff function \( V_i(\pi_i, \pi_{-i}) \) and its associated individual payoff gradient \( \nabla_i V_i(\pi_i, \pi_{-i}) \). For notational simplicity, the individual payoff gradient may occasionally be represented by \( v_i(\pi_i, \pi_{-i}) \), and the vector of all players' individual payoff gradients by \( v(\pi) = (v_i(\pi_i, \pi_{-i}))_{i \in N} \). 
\par Inspired by \cite{Policygradientmethodsforreinforcementlearningwithfunctionapproximation}, the advantage functions of a state-action pair $(s,a)$ for player $i$ given a policy $\pi$ are defined as:
\begin{equation}
    \advantage^{\pi}_{i}(s,a) = \sum_{t=0}^{\infty} \mathbb{E}\big[r_i(s^t,a^t) - V_i(\pi_i, \pi_{-i}) | s_0 = s, a_0 = a , \pi \big],
\end{equation}
which represents the sum of the differences between the rewards received by player $i$ and the value function $V_i(\pi_i, \pi_{-i})$, given that the game begins at state $s$ with action $a$ and subsequently follows the stationary policy profile $\pi$.
\par We also define the average advantage functions as:
\begin{align}
    \overline{\advantage}^{\pi}_{i}(s,a_i) := \mathbb{E}_{a_{-i} \sim \pi_{-i}(\cdot | s)} \big[ \advantage^{\pi}_i(s, (a_i,a_{-i}))  \big].
\end{align}
\par Before embarking on the subsequent analysis, we need to clarify that $\advantage^{\pi}_{i}(s,a)$ and $\overline{\advantage}^{\pi}_{i}(s,a_i)$ is well-defined, which means they will not be infinity.
\begin{lemma}\label{advbound}
    The advantage functions $\advantage^{\pi}_i(s,a)$ are bounded with respect to the policy profile $\pi$, and so are the average advantage functions $\overline{\advantage}^{\pi}_{i}(s,a_i)$.
\end{lemma}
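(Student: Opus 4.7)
\bigskip
\noindent\textbf{Proof proposal.} The plan is to reduce the problem to showing that the sequence of expected rewards generated by the policy profile $\pi$, when started from $(s,a)$, converges to $V_i(\pi_i,\pi_{-i})$ at a geometric rate governed by the mixing time $\tau$. Then the infinite sum defining $\advantage^{\pi}_{i}(s,a)$ is dominated by a geometric series, and hence bounded.

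First, since $S$ and each $A_i$ are finite, set $R = \max_{i,s,a} |r_i(s,a)|$, which is finite. For a fixed policy profile $\pi$, define the conditional one-step reward $\bar r_i(s') = \sum_{a\in A}\big(\prod_{j\in N}\pi_j(a_j|s')\big)\, r_i(s',a)$, so $|\bar r_i(s')|\leq R$. Let $w^t\in\Delta(S)$ denote the law of $s^t$ conditional on $s_0=s$, $a_0=a$ and all players following $\pi$ from time $1$ onward; in particular $w^1(\cdot)=\mathbb{P}[\cdot\,|\,s,a]$ and $w^{t+1}=w^t P_\pi$ for $t\geq 1$. Then for $t\geq 1$,
\begin{equation*}
\mathbb{E}\big[r_i(s^t,a^t)\,\big|\,s_0=s,a_0=a,\pi\big] \;=\; \sum_{s'\in S} w^t(s')\,\bar r_i(s'),\qquad V_i(\pi_i,\pi_{-i})\;=\;\sum_{s'\in S} p_\pi(s')\,\bar r_i(s').
\end{equation*}

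Next, I invoke Assumption \ref{assumptionintransition}. Because $p_\pi$ is invariant under $P_\pi$, applying the contraction inequality iteratively to $w^t-p_\pi$ yields $\|w^t-p_\pi\|_1\leq e^{-(t-1)/\tau}\|w^1-p_\pi\|_1\leq 2e^{-(t-1)/\tau}$ for all $t\geq 1$. Combined with the representation above and $\|\bar r_i\|_\infty\leq R$, H\"older's inequality gives
\begin{equation*}
\Big|\mathbb{E}\big[r_i(s^t,a^t)\,\big|\,s_0=s,a_0=a,\pi\big] - V_i(\pi_i,\pi_{-i})\Big|\;\leq\; 2R\,e^{-(t-1)/\tau}\qquad(t\geq 1),
\end{equation*}
while for $t=0$ the summand is trivially bounded by $2R$. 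Summing the resulting geometric series,
\begin{equation*}
\big|\advantage^{\pi}_{i}(s,a)\big| \;\leq\; 2R \;+\; \sum_{t=1}^{\infty} 2R\,e^{-(t-1)/\tau} \;=\; 2R\Big(1+\frac{1}{1-e^{-1/\tau}}\Big),
\end{equation*}
which is independent of $\pi$, $s$ and $a$. Finally, $\overline{\advantage}^{\pi}_{i}(s,a_i)$ is a convex combination over $a_{-i}\sim\pi_{-i}(\cdot|s)$ of the quantities $\advantage^{\pi}_i(s,(a_i,a_{-i}))$, so the same uniform bound applies.

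The only nontrivial step is the second one: extracting the geometric decay $\|w^t-p_\pi\|_1\leq 2e^{-(t-1)/\tau}$. This is where Assumption \ref{assumptionintransition} is essential, since without a uniform mixing time the tail of the series defining $\advantage^{\pi}_{i}(s,a)$ could fail to be summable. Everything else is routine: finiteness of $S$ and $A$ makes $\bar r_i$ bounded, and Fubini/linearity lets us exchange expectation and the infinite sum once absolute convergence is established.
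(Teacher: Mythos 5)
Your proposal is correct and follows essentially the same route as the paper's proof: both express the time-$t$ summand as an inner product of the state distribution with the expected one-step reward vector, use Assumption \ref{assumptionintransition} to get geometric decay of $\|w^t - p_\pi\|_1$, sum the resulting geometric series, and bound the average advantage as a convex combination. If anything, your version is slightly more careful than the paper's in tracking that the first transition is governed by the fixed action $a_0=a$ rather than by $\pi$.
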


\par We now provide an equivalent formulation of the individual payoff gradients $v_i(\pi)$ which will help us to analyse the boundedness and smoothness of the gradients. We start with the following versions of the policy gradient theorem for stochastic games with long-run average payoffs.
\begin{theorem}[Policy gradient theorem]\label{expression}
    For any player $i$ and any stationary policy profile $\pi$, we have
    \begin{equation}
        \nabla_i V_i(\pi) = \sum_{s \in S} p_{\pi}(s) \sum_{a_i \in A_i} (\nabla_i \pi_i(a_i|s)) \overline{\advantage}^\pi_i(s,a_i).
    \end{equation}
\end{theorem}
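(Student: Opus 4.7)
The plan is to mimic the classical average-reward policy gradient proof of Sutton et al., adapted to the multi-agent stationary-policy setting, with the bias (relative value) function playing the central role. First I would introduce the bias function
\[
    h^{\pi}_i(s) \;=\; \sum_{t=0}^{\infty} \mathbb{E}\bigl[r_i(s^t,a^t) - V_i(\pi)\,\bigl|\,s^0=s,\,\pi\bigr],
\]
which is finite by exactly the same argument used in Lemma \ref{advbound} (the mixing-time bound of Assumption \ref{assumptionintransition} makes the tail geometrically summable), and observe the elementary identity $h^{\pi}_i(s)=\sum_{a}\pi(a|s)\advantage^{\pi}_i(s,a)$, together with the one-step recursion for the advantage
\[
    \advantage^{\pi}_i(s,a) \;=\; r_i(s,a) - V_i(\pi) + \sum_{s'\in S} \mathbb{P}[s'|s,a]\, h^{\pi}_i(s').
\]

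Next I would write the Poisson/Bellman equation obtained by summing the previous identity against $\pi(a|s)$,
\[
    V_i(\pi) + h^{\pi}_i(s) \;=\; \sum_{a} \pi(a|s)\Bigl[r_i(s,a) + \sum_{s'} \mathbb{P}[s'|s,a]\, h^{\pi}_i(s')\Bigr],
\]
and differentiate both sides with respect to $\pi_i$. Since $\pi(a|s)=\pi_i(a_i|s)\prod_{j\neq i}\pi_j(a_j|s)$, we get $\nabla_i\pi(a|s)=(\nabla_i\pi_i(a_i|s))\prod_{j\neq i}\pi_j(a_j|s)$, so marginalizing $a_{-i}$ against $\pi_{-i}$ will bring the average-advantage $\overline{\advantage}^{\pi}_i(s,a_i)$ into play. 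Using $\sum_{a}\nabla_i\pi(a|s)=0$, I can freely insert a $-V_i(\pi)$ inside the first bracket to reconstruct $\advantage^{\pi}_i(s,a)$, yielding the per-state identity
\[
    \nabla_i V_i(\pi) + \nabla_i h^{\pi}_i(s) \;=\; \sum_{a_i}\bigl(\nabla_i \pi_i(a_i|s)\bigr)\,\overline{\advantage}^{\pi}_i(s,a_i) \;+\; \sum_{s'} P_{\pi}(s'|s)\,\nabla_i h^{\pi}_i(s').
\]

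The final step is the standard stationary-distribution cancellation: multiply the above by $p_{\pi}(s)$, sum over $s$, and use the invariance $\sum_{s} p_{\pi}(s) P_{\pi}(s'|s) = p_{\pi}(s')$ guaranteed by Assumption \ref{assumptionintransition}. The two terms involving $\nabla_i h^{\pi}_i$ on the left and right then coincide and cancel, while $\nabla_i V_i(\pi)\sum_s p_{\pi}(s) = \nabla_i V_i(\pi)$, leaving exactly the claimed formula.

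The main obstacle is justifying the formal manipulations rigorously rather than the algebra itself: namely, the validity of interchanging $\nabla_i$ with the infinite sum defining $h^{\pi}_i$, and the existence/differentiability of $h^{\pi}_i$ as a function of $\pi$ on the relevant simplex. Both reduce to showing uniform geometric decay of $\mathbb{E}[r_i(s^t,a^t) - V_i(\pi)\mid s^0=s,\pi]$, which follows from Assumption \ref{assumptionintransition} exactly as in the boundedness proof of Lemma \ref{advbound}; since $\pi\mapsto P_{\pi}$ and $\pi\mapsto p_{\pi}$ are smooth on the interior of the policy simplex, this transfers to smoothness of $h^{\pi}_i$ and legitimizes the differentiation under the sum.
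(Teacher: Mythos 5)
Your proof is correct and takes essentially the same route as the paper's: your bias function $h^{\pi}_i(s)$ is exactly $\sum_{a_i}\pi_i(a_i|s)\overline{\advantage}^{\pi}_i(s,a_i)$, and differentiating the Poisson equation, reconstructing the advantage via $\sum_a \nabla_i\pi(a|s)=0$, and cancelling the $\nabla_i h^{\pi}_i$ terms against the stationary distribution is precisely the paper's computation in slightly different notation. Your closing remark on justifying the interchange of $\nabla_i$ with the infinite sum is a point the paper leaves implicit, and is handled correctly by the geometric-decay argument you cite.
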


\begin{remark}\label{relationaboutnabla}
    As we have $v_i(\pi) = \nabla_i V_i(\pi) = (\frac{\partial}{\partial \pi_i(a_i|s)} V_i(\pi))_{(a_i,s) \in A_i \times S}$, it can be obtained that
    \begin{equation}
        \frac{\partial}{\partial \pi_i(a_i|s)} V_i(\pi) = p_{\pi}(s) \overline{\advantage}^\pi_i(s,a_i).
    \end{equation}
\end{remark}

\par By Theorem \ref{expression} and Lemma \ref{advbound}, it can be immediately demonstrated that $v_i$ is bounded with respect to policy profile $\pi$.

\begin{lemma}
    For any player $i$, the individual payoff gradient $v_i(\pi)$ is bounded with respect to the policy profile $\pi$.
\end{lemma}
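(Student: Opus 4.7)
The plan is to reduce the boundedness of $v_i(\pi)$ to the boundedness of its individual coordinates, and then invoke Remark \ref{relationaboutnabla} together with Lemma \ref{advbound} to control each coordinate separately. Since $\Pi_i \subset \mathbb{R}^{|S|\times |A_i|}$ sits in a finite-dimensional space, a coordinate-wise uniform bound immediately yields a norm bound (in any chosen norm), so the problem really amounts to bounding each partial derivative.

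Concretely, I would first recall from Remark \ref{relationaboutnabla} that for every state-action pair $(s,a_i)$,
\begin{equation*}
    \frac{\partial V_i(\pi)}{\partial \pi_i(a_i \mid s)} \;=\; p_{\pi}(s)\,\overline{\advantage}^{\pi}_{i}(s,a_i).
\end{equation*}
Then I would bound the two factors separately: $p_{\pi}(s) \in [0,1]$ because it is a probability from the stationary distribution of the ergodic chain $P_{\pi}$ guaranteed by Assumption \ref{assumptionintransition}, and $|\overline{\advantage}^{\pi}_{i}(s,a_i)| \leqslant M$ for some constant $M$ independent of $\pi$, by Lemma \ref{advbound}. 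Hence every coordinate of $v_i(\pi)$ is uniformly bounded by $M$.

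To finish, I would note that $v_i(\pi)$ is a vector in the finite-dimensional space $\mathbb{R}^{|S|\times |A_i|}$, so a uniform entrywise bound immediately gives a uniform bound in any norm (e.g.\ $\|v_i(\pi)\|_2 \leqslant M\sqrt{|S|\cdot|A_i|}$). Since the argument applies uniformly to every player $i \in N$, the full gradient vector $v(\pi)$ is likewise bounded.

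There is no real obstacle here: the heavy lifting has already been done by Theorem \ref{expression} (which allows one to express the gradient in closed form using $p_{\pi}$ and $\overline{\advantage}^{\pi}_{i}$) and by Lemma \ref{advbound} (which supplies the uniform bound on the average advantage functions). The only mild subtlety is to make sure the bound on $\overline{\advantage}^{\pi}_i$ obtained in Lemma \ref{advbound} is indeed uniform in $\pi$; this follows from the uniform mixing-time assumption in Assumption \ref{assumptionintransition}, so no extra work is required.
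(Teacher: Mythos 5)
Your proposal is correct and follows exactly the route the paper intends: it invokes the policy gradient expression (Theorem \ref{expression}, via Remark \ref{relationaboutnabla}) to write each coordinate as $p_{\pi}(s)\,\overline{\advantage}^{\pi}_{i}(s,a_i)$ and then applies the uniform bound from Lemma \ref{advbound}, which is indeed independent of $\pi$ since it equals $2(\max_{s,a} r_i(s,a))\sum_{t=0}^{\infty} e^{-t/\tau}$. No gaps.
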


\par However, it is insufficient to apply the method that will be presented in the subsequent section to estimate $v_i(\pi)$. It is necessary to demonstrate that $v_i(\pi)$ is Lipschitz continuous with respect to $\pi$. This will be done by proving the following theorem based on some auxiliary lemmas, which will be provided in the appendix.
\begin{theorem}\label{Lipschitzofgradient}
    For any player $i$ and state-action pair $(s,a_i)$, the partial derivative of the value function $\frac{\partial}{\partial \pi_i(a_i|s)} V_i(\pi)$ is Lipschitz continuous with respect to $\pi$. Consequently, the individual payoff gradient $\nabla_{i} V_i(\pi)$ is also Lipschitz continuous.
\end{theorem}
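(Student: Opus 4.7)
My starting point is the factorization from Remark \ref{relationaboutnabla},
\[
\frac{\partial V_i(\pi)}{\partial \pi_i(a_i|s)} \;=\; p_\pi(s)\cdot \overline{\advantage}^\pi_i(s,a_i).
\]
Both factors are bounded uniformly in $\pi$ (the stationary distribution lies in $[0,1]$, and the average advantage is bounded by Lemma \ref{advbound}), so since the product of two bounded Lipschitz functions is Lipschitz, it suffices to prove that $p_\pi(s)$ and $\overline{\advantage}^\pi_i(s,a_i)$ are each Lipschitz in $\pi$. The underlying building blocks $P_\pi$ and the one-step expected reward $r_i^\pi(s):=\sum_a r_i(s,a)\prod_j \pi_j(a_j|s)$ are polynomial in $\pi$ by \eqref{transition}, hence Lipschitz on the compact domain $\Pi$; these serve as the base inputs for all subsequent estimates.

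For $p_\pi$, the plan is a standard perturbation argument that leverages the uniform geometric ergodicity of Assumption \ref{assumptionintransition}. Subtracting the fixed-point identities $p_\pi = p_\pi P_\pi$ and $p_{\pi'} = p_{\pi'} P_{\pi'}$, iterating once and telescoping $P_\pi^n - P_{\pi'}^n$ in the standard way, I expect to arrive at
\[
p_\pi - p_{\pi'} \;=\; p_\pi (P_\pi - P_{\pi'})\, Z_{\pi'},
\]
where $Z_{\pi'} = \sum_{k \geq 0}(P_{\pi'}^k - \mathbf{1}\, p_{\pi'}^\top)$ is the deviation matrix. Assumption \ref{assumptionintransition} bounds the norm of $Z_{\pi'}$ uniformly in $\pi'$ by a geometric series with ratio $e^{-1/\tau}$, yielding $\|p_\pi - p_{\pi'}\|_1 \leq C\, \|P_\pi - P_{\pi'}\|_1$ and hence Lipschitzness of $p_\pi$.

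For $\overline{\advantage}^\pi_i$, I would introduce the bias function $h^\pi_i(s) := \sum_{a_i}\pi_i(a_i|s)\,\overline{\advantage}^\pi_i(s,a_i)$, which by the definition of the advantage satisfies the Poisson equation $(I-P_\pi)h^\pi_i = r_i^\pi - V_i(\pi)\mathbf{1}$ together with the normalization $p_\pi^\top h^\pi_i = 0$, giving the explicit representation $h^\pi_i = Z_\pi\, r_i^\pi$. The same telescoping technique used for $p_\pi$, applied once more to the partial sums defining $Z_\pi$, shows that $Z_\pi$ itself is Lipschitz in $\pi$, so $h^\pi_i$ is Lipschitz. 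Unrolling the advantage one step then produces
\[
\overline{\advantage}^\pi_i(s,a_i) \;=\; \sum_{a_{-i}} \pi_{-i}(a_{-i}|s)\Bigl(r_i(s,a_i,a_{-i}) - V_i(\pi) + \sum_{s'}\mathbb{P}[s'|s,a_i,a_{-i}]\, h^\pi_i(s')\Bigr),
\]
which expresses $\overline{\advantage}^\pi_i$ as a polynomial combination (in $\pi_{-i}$) of Lipschitz quantities; combined with the obvious Lipschitzness of $V_i(\pi)=p_\pi^\top r_i^\pi$, this gives the desired Lipschitz bound. Multiplying the two factors then lifts Lipschitz continuity from each coordinate $\partial V_i/\partial \pi_i(a_i|s)$ to the full gradient $\nabla_i V_i(\pi)$.

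The principal obstacle is the perturbation analysis of $p_\pi$ and $Z_\pi$: these are the only places where Assumption \ref{assumptionintransition} actively carries the argument, and extracting a $\pi$-uniform Lipschitz constant requires the geometric contraction $e^{-1/\tau}$ to dominate the algebraic growth inside the telescope. Once those ergodic-theoretic lemmas are in place (presumably the auxiliary lemmas the authors defer to the appendix), the remainder is a routine chain of bounded-Lipschitz products, sums, and compositions.
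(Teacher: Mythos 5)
Your proposal is correct and follows essentially the same route as the paper: factor the partial derivative as $p_\pi(s)\,\overline{\advantage}^\pi_i(s,a_i)$, show each factor is bounded (Lemma \ref{advbound}) and Lipschitz via the Lipschitz continuity of $P_\pi$ together with the uniform contraction of Assumption \ref{assumptionintransition}, and conclude by the product rule for bounded Lipschitz functions. Your packaging of the advantage part through the Poisson equation and the deviation matrix $Z_\pi$ is only a cosmetic variant of the paper's direct telescoping of $P_\pi^t - P_{\pi'}^t$ (Lemma \ref{Pt-Pt'}) and summation of the resulting series $\sum_t t\,e^{-(t-1)/\tau}$.
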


\par At the end of this section we prove that, stochastic games with long-run average payoff satisfy the gradient dominance property. Before presenting the property, we need to define the mismatch coefficient as $C_{\mathcal{G}} = \max_{\pi,\pi' \in \Pi} \{ \| \frac{p_{\pi}}{p_{\pi'}} \|_{\infty} \} $. Like the prior work \cite{OntheconvergenceofpolicygradientmethodstoNashequilibriaingeneralstochasticgames}, it is be assumed that $C_{\mathcal{G}}> 0 $ and is finite in our work. Then we have

\begin{theorem}[Gradient dominance property]\label{Gradientdominanceproperty}
    For any policy profile $\pi = (\pi_i)_{i \in N} \in \Pi$, we have
    \begin{equation}
        \begin{aligned}
            V_i(\pi'_{i}, \pi_{-i}) - V_i(\pi_{i}, \pi_{-i}) \leqslant C_{\mathcal{G}} \langle \nabla_i V_{i}(\pi), \pi'_i - \pi_i \rangle,
        \end{aligned}
    \end{equation}
    where $\pi'_i \in \Pi_i$ is any proper individual deviation of player $i$.
\end{theorem}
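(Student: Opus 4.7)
The plan is a three-step gradient-dominance template: derive a performance difference identity for the long-run average payoff, identify the gradient via Remark \ref{relationaboutnabla}, and bridge the two by the mismatch coefficient $C_{\mathcal{G}}$.

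First, I would establish a performance difference lemma
\begin{equation*}
V_i(\pi'_i, \pi_{-i}) - V_i(\pi_i, \pi_{-i}) = \sum_{s \in S} p_{(\pi'_i, \pi_{-i})}(s) \sum_{a_i \in A_i} \big(\pi'_i(a_i \mid s) - \pi_i(a_i \mid s)\big) \overline{\advantage}^{\pi}_i(s, a_i).
\end{equation*}
The derivation starts from $V_i(\pi'_i, \pi_{-i}) = \mathbb{E}_{s \sim p_{\pi'},\, a \sim \pi'(\cdot \mid s)}[r_i(s, a)]$ and uses the Bellman-type identity $r_i(s, a) - V_i(\pi) = \advantage^{\pi}_i(s, a) - \sum_{s'} \mathbb{P}(s' \mid s, a)\, h^{\pi}_i(s')$ with the bias function $h^{\pi}_i(s) := \sum_{a_i} \pi_i(a_i \mid s) \overline{\advantage}^{\pi}_i(s, a_i)$. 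When summed against $p_{\pi'}$, the telescoping $h^{\pi}_i$ terms cancel via the stationarity identity $\sum_s p_{\pi'}(s)\,(P_{\pi'} h^{\pi}_i)(s) = \sum_s p_{\pi'}(s)\, h^{\pi}_i(s)$, and what remains is exactly the stated formula after averaging $a_{-i}$ into $\overline{\advantage}^{\pi}_i$.

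Second, Remark \ref{relationaboutnabla} gives the explicit form
\begin{equation*}
\langle \nabla_i V_i(\pi), \pi'_i - \pi_i \rangle = \sum_{s \in S} p_{\pi}(s) \sum_{a_i \in A_i} \big(\pi'_i(a_i \mid s) - \pi_i(a_i \mid s)\big) \overline{\advantage}^{\pi}_i(s, a_i),
\end{equation*}
which differs from the performance difference only in that $p_{\pi}$ replaces $p_{(\pi'_i, \pi_{-i})}$ as the state weighting.

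Third, I would close the gap using the pointwise change-of-measure bound $p_{(\pi'_i, \pi_{-i})}(s) \le C_{\mathcal{G}}\, p_{\pi}(s)$, which holds by definition of the mismatch coefficient. The main obstacle is a sign issue: this pointwise bound transfers to the weighted state-sum only when the inner state-wise quantity $\sum_{a_i}(\pi'_i(a_i \mid s) - \pi_i(a_i \mid s))\overline{\advantage}^{\pi}_i(s, a_i)$ is nonnegative at every $s$. This is precisely what "proper individual deviation" captures: $\pi'_i$ is a state-wise improvement direction against $\pi_{-i}$ (e.g., a greedy or best-response-type deviation), so that the inner sum is nonnegative state by state and the mismatch bound carries through. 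Equivalently, one may first replace the state-wise inner sum with its maximum $\max_{\bar{\pi}_i(\cdot \mid s) \in \Delta(A_i)} \sum_{a_i}(\bar{\pi}_i(a_i \mid s) - \pi_i(a_i \mid s))\overline{\advantage}^{\pi}_i(s, a_i)$, which is automatically nonnegative since $\bar{\pi}_i = \pi_i$ attains $0$, invoke the mismatch bound on this nonnegative pointwise quantity, and then specialize to the $\pi'_i$ achieving the maximum — recovering the max-form gradient dominance familiar from the discounted analogues in \cite{GradientPlayinStochasticGamesStationaryPointsandLocalGeometry,OntheconvergenceofpolicygradientmethodstoNashequilibriaingeneralstochasticgames}.
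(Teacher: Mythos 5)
Your proposal follows essentially the same route as the paper: the same Bellman-type identity for $\advantage^{\pi}_i$ and telescoping against the stationary distribution $p_{(\pi'_i,\pi_{-i})}$ to get the performance difference formula, the same identification of the gradient via Remark \ref{relationaboutnabla}, and the same change of measure through $C_{\mathcal{G}}$. If anything you are more careful than the paper at the final step: the paper pulls out $\| p_{\pi'}/p_{\pi}\|_{\infty}$ without observing that this requires the per-state inner sum $\sum_{a_i}(\pi'_i(a_i|s)-\pi_i(a_i|s))\overline{\advantage}^{\pi}_i(s,a_i)$ to be nonnegative, whereas you correctly flag this sign issue and identify it as the content of the ``proper individual deviation'' hypothesis.
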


\par Based on the gradient dominance property, in stochastic games with long-run average payoffs, Nash equilibria are equivalent to the first-order stationary points, which is similar to the case in games with discounted payoffs \cite{GradientPlayinStochasticGamesStationaryPointsandLocalGeometry}.
\begin{theorem}[First-order stationary policies are Nash]\label{FirstorderstationarypoliciesareNash}
    A policy profile $\pi^* = (\pi^*_i)_{i \in N}$ is Nash equilibrium if and only if it satisfies the first-order stationary condition
    \begin{equation}
        \begin{aligned}
            \langle v(\pi^*), \pi -\pi^* \rangle \leqslant 0,\quad  \forall \pi \in \Pi.
        \end{aligned}
    \end{equation}
\end{theorem}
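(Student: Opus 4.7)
The plan is to prove the two implications separately, relying on standard first-order optimality for the ``only if'' direction and on the gradient dominance property (Theorem \ref{Gradientdominanceproperty}) for the ``if'' direction. The bridge between the joint variational inequality $\langle v(\pi^*), \pi - \pi^* \rangle \leqslant 0$ and the per-player Nash conditions $V_i(\pi_i, \pi^*_{-i}) \leqslant V_i(\pi^*_i, \pi^*_{-i})$ is the following observation: for any fixed player $i$ and any $\pi_i \in \Pi_i$, the joint policy $\pi = (\pi_i, \pi^*_{-i})$ lies in $\Pi$ and satisfies $\pi_j - \pi^*_j = 0$ for all $j \neq i$, so the inner product collapses to the single term $\langle v_i(\pi^*), \pi_i - \pi^*_i \rangle$.

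For the ``only if'' direction, suppose $\pi^*$ is a Nash equilibrium. Fix a player $i$ and a policy $\pi_i \in \Pi_i$. Since $\Pi_i$ is convex, the segment $\pi^*_i + t(\pi_i - \pi^*_i)$ lies in $\Pi_i$ for all $t \in [0,1]$, and by the Nash condition $V_i(\pi^*_i + t(\pi_i - \pi^*_i), \pi^*_{-i}) \leqslant V_i(\pi^*_i, \pi^*_{-i})$. Differentiating the left-hand side at $t = 0^+$ (which is justified by the smoothness of $V_i$ established via Theorem \ref{Lipschitzofgradient}) yields $\langle v_i(\pi^*), \pi_i - \pi^*_i \rangle \leqslant 0$. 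Summing the per-player inequalities, which is legitimate because the gradient $v(\pi^*)$ has a block structure aligned with the product set $\Pi = \prod_i \Pi_i$, then gives $\langle v(\pi^*), \pi - \pi^* \rangle \leqslant 0$ for every $\pi \in \Pi$.

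For the ``if'' direction, assume the variational inequality holds. For any player $i$ and any deviation $\pi_i \in \Pi_i$, apply the observation above to the single-player perturbation $(\pi_i, \pi^*_{-i}) \in \Pi$ to obtain $\langle v_i(\pi^*), \pi_i - \pi^*_i \rangle \leqslant 0$. The gradient dominance property then furnishes
\begin{equation*}
    V_i(\pi_i, \pi^*_{-i}) - V_i(\pi^*_i, \pi^*_{-i}) \leqslant C_{\mathcal{G}} \langle \nabla_i V_i(\pi^*), \pi_i - \pi^*_i \rangle \leqslant 0,
\end{equation*}
where $C_{\mathcal{G}} > 0$ is the mismatch coefficient. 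Since $i$ and $\pi_i$ were arbitrary, $\pi^*$ is a Nash equilibrium.

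I do not anticipate a substantive obstacle: both directions are essentially one-line consequences once the correct constructions are in place. The only care point is verifying that first-order optimality over the simplex product $\Pi$ splits coordinate-wise into per-player conditions and that directional derivatives along feasible segments are legitimate, which follows immediately from the polyhedral (product-of-simplices) geometry of $\Pi$ and the continuous differentiability of $V_i$ guaranteed by Theorem \ref{Lipschitzofgradient}.
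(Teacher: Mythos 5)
Your proof is correct and follows essentially the same route as the paper's: the ``only if'' direction comes from per-player first-order optimality of $\pi^*_i$ over the convex set $\Pi_i$ (which you spell out via the line-segment argument the paper leaves implicit) summed over the block structure of $v(\pi^*)$, and the ``if'' direction invokes the gradient dominance property on single-player deviations $(\pi_i,\pi^*_{-i})$, exactly as the paper does (the paper merely phrases this direction as a proof by contradiction).
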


\section{The Learning Framework}\label{TheLearningFramework}
\par In this section, we present our algorithm in stochastic games. It is based on a fundamental learning framework known as the regularized Robbins-Monro template \cite{mertikopoulos_unified_2024}. However, applying this method to stochastic games necessitates the estimation of the individual payoff gradients, i.e., \( v(\pi) = (v_i(\pi))_{i \in N} \), which is challenging to compute directly. To overcome this difficulty, we employ the simultaneous perturbation stochastic approximation method \cite{BanditlearninginconcaveN-persongames},  which requires only the individual payoffs for estimating \( v_i(\pi) \). By integrating these two approaches, we devise an algorithm aimed at attaining the Nash equilibrium for stochastic games with long-run average payoffs. Notably, in our algorithm, players only need to observe their own actions and resulting payoffs to learn their individual payoff gradients, so it is a distributed bandit algorithm.
\subsection{Regularized Robbins-Monro process}
\par Combining Robbins-Monro algorithm \cite{RobbinsMonro}, a famous stochastic approximation method, and ``follow-the-regularized-leader'' family of algorithms of Shalev-Shwartz \& Singer \cite{OnlineLearningandOnlineConvexOptimization}, we obtain the following learning framework called regularized Robbins-Monro template \cite{mertikopoulos_unified_2024}:
\begin{equation}
\begin{aligned}
    Y^{t+1} &= Y^t + \gamma^t \hat{v}^{t}, \\
    \pi^{t+1} &= Q(Y^{t+1})
\end{aligned}
\end{equation}
where
\begin{enumerate}
    \item $\pi^t = (\pi^t_i)_{i \in N}$ is the players' policy profile at time $t$.
    \item $\hat{v}^t = (\hat{v}^t_{i})_{i \in N}$ is an individual ``gradient-like'' signal that estimates the individual gradient $v(\pi^t) = (v_i(\pi^t))_{i \in N}$.
    \item $Y^t = (Y^t_{i})_{i \in N}$ is the weighted sum of $ \{ \hat{v}^s | s \leqslant t \}$ in dual space of $\Pi$.
    \item $ \gamma^t > 0$ is the step-size, and to guarantee the perpetual update of parameters throughout the iterative process, we assume that $\sum_{t} \gamma^t = \infty$.
\end{enumerate}
The most important element is the function $Q: \mathcal{Y} \to \Pi \subseteq \mathbb{R}^{|S| \times |A|}$, where $\mathcal{Y}$ is the dual space of $\Pi$. It is a ``general projection'' map that mirrors gradient steps in $\mathcal{Y}$ to policy space $\Pi$, and we call it the players' mirror map, which is related to the mirror descent algorithm, as decribed in reference \cite{darzentas_problem_1984}.
\par To define the mirror map more specifically, we decompose it into the product form $Q = (Q_i)_{i \in N}$, where $Q_i: \mathcal{Y}_i \to \Pi_i \subseteq \mathbb{R}^{|S| \times |A_i|}$ is the mirror map of a single player $i$. We begin with introducing the concept of ``regularizer'' on $\Pi_i$ as follows:
\begin{definition}[Regularizer]
    For any player $i$, $h_i: \mathbb{R}^{|S|\times |A_i|} \to \mathbb{R}\cup \{ \infty \}$ is a regularizer on $\Pi_i$ if
    \begin{enumerate}
        \item $h_i$ is supported on $\Pi_i$, i.e., $\{ x_i \in \mathbb{R}^{|S|\times |A_i|}: h_i(x_i) < \infty \} = \Pi_i$.
        \item $h_i$ is continuous and $K_i$-strongly convex on $\Pi_i$, i.e., there exists a constant $K_i > 0 $ such that for all $\pi_i$, $\pi'_i \in \Pi_i$ and all $ \lambda \in [0,1] $
        \begin{align*}
            h_i \left( \lambda \pi_i + \left(1-\lambda \right) \pi'_i \right) \leqslant \lambda h_i\left(\pi_i\right) + \left(1-\lambda\right) h_i\left(\pi'_i\right) - \frac{1}{2} K_i \lambda \left(1-\lambda\right)\| \pi'_i - \pi_i \|^2_2.
        \end{align*}
    \end{enumerate}
\end{definition}
And we define the mirror map $Q: \mathcal{Y}_i \to \Pi_i$ and convex conjugate $h_i^*: \mathcal{Y}_i \to \mathbb{R}$ induced by regularizer $h_i$:
\begin{align}
    Q_i(Y_i) =& \argmax_{\pi_i \in \Pi_i} \{ \langle Y_i, \pi_i \rangle - h_i(\pi_i) \}, \label{definitionofQ} \\
    h_i^*(Y_i) =& \max_{\pi_i \in \Pi_i} \{  \langle Y_i, \pi_i \rangle - h_i(\pi_i)  \}.\label{definitionofh}
\end{align}
For convenience, we denote $h(\pi) = \sum_{i \in N} h_i(\pi_i)$ for the players' aggregate regularizer and $Q = (Q_i)_{i \in N}$ for the induced mirror map.
\begin{example}[Entropic regularizaion]
    Let $h_i(\pi_i) = \sum_{a_i \in A_i, s \in S} \pi_i(a_i|s) \log \pi_i(a_i|s)$ be the (negative) Gibbs-Shannon entropy on $\Pi_i$. With straightforward calculations, the induced mirror map of each player $i$ is the logit choice map, i.e.,
    \begin{equation}
        \begin{aligned}
            (Q_i(Y_i))_{a_i,s} = \frac{\exp(Y_i(a_i,s))}{\sum_{a_i \in A_i} \exp(Y_i(a_i,s))},
        \end{aligned}
    \end{equation}
    which is Hedge algorithm in learning in finite games \cite{Cesa-Bianchi_Lugosi_2006}.
\end{example}

\par For the analysis of the convergence of the learning process, we introduce the Fenchel coupling:
\begin{equation}\label{definitionofF}
    \begin{aligned}
        F(p, y) = h(p) + h^*(y) - \langle y,p \rangle,
    \end{aligned}
\end{equation}
which can be seen as the global energy function if the equilibrium point satisfies variational stability \cite{mertikopoulos_unified_2024}. In more specific terms, it can be employed to measure the distance between a Nash equilibrium $\pi^*$ and any policy profile $\pi$, as it has the following properties.
\begin{lemma}(\cite{mertikopoulos_learning_2019})\label{propertyofFenchel}
    If $h$ is a $K$- strongly convex regularizer on $\mathcal{X}$, let $Q$ be the mirror map induced by $h$, fix some $p \in \mathcal{X}$, and for all $y,y' \in \mathcal{Y}$, the dual space of $\mathcal{X}$, we have
    \begin{enumerate}
        \item $F(p,y) \geqslant \frac{1}{2} K \| Q(y) - p \|^2.$
        \item $F(p,y') \leqslant F(p,y) + \langle y'-y, Q(y) -p \rangle + \frac{1}{2K} \| y'-y \|^2 $.
    \end{enumerate}
\end{lemma}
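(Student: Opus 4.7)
The plan is to derive both inequalities directly from the definitions, using the first-order optimality characterization of the mirror map $Q$ together with the $K$-strong convexity of $h$. The starting observation is that by $(\ref{definitionofh})$ we can rewrite
\[
    F(p,y) \;=\; h(p) \;-\; h(Q(y)) \;-\; \langle y,\, p - Q(y)\rangle,
\]
since $h^*(y) = \langle y, Q(y)\rangle - h(Q(y))$. This expresses the Fenchel coupling as a Bregman-type gap between $p$ and $Q(y)$ with linearization vector $y$.

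For the first inequality, I would invoke the first-order optimality condition for the maximizer $Q(y)$ in $(\ref{definitionofQ})$, which, together with $K$-strong convexity of $h$ on $\mathcal{X}$, yields
\[
    h(p) \;\geqslant\; h(Q(y)) + \langle y,\, p - Q(y)\rangle + \tfrac{K}{2}\,\|p - Q(y)\|^2 \qquad \text{for all } p\in\mathcal{X}.
\]
Substituting this into the expression for $F(p,y)$ above gives exactly $F(p,y)\geqslant \tfrac12 K\,\|Q(y)-p\|^2$.

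For the second inequality, I would pass to the convex conjugate. The standard fact from convex analysis is that $K$-strong convexity of $h$ on $\mathcal{X}$ is equivalent to $(1/K)$-smoothness of $h^*$ on $\mathcal{Y}$, together with the identity $\nabla h^*(y) = Q(y)$. Using this smoothness estimate,
\[
    h^*(y') \;\leqslant\; h^*(y) + \langle Q(y),\, y' - y\rangle + \tfrac{1}{2K}\,\|y'-y\|^2,
\]
and then subtracting $\langle y', p\rangle$ from both sides and adding $h(p)$, one recovers the claim after rearranging $\langle Q(y), y'-y\rangle - \langle y'-y, p\rangle = \langle y'-y,\, Q(y)-p\rangle$.

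The main obstacle is really the smoothness-of-conjugate step: it is a classical result but depends on $h$ being closed, proper, and strongly convex on its effective domain $\mathcal{X}$, and on verifying that $\nabla h^*(y) = Q(y)$ is single-valued and everywhere defined on $\mathcal{Y}$. Given the regularizer assumptions (continuity and strong convexity on $\Pi$, compactness of $\Pi$), these hypotheses hold, so the duality argument goes through without complication; alternatively, one can avoid invoking the smoothness lemma by combining the first inequality evaluated at $y$ and at $y'$ with the three-point identity for Fenchel couplings, which yields the same bound through a direct manipulation.
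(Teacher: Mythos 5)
Your proof is correct, and it is worth noting where it coincides with and where it departs from the paper's argument. For the first inequality your route is essentially the paper's: the paper evaluates the optimality of $Q(y)$ at the interpolated point $Q(y)+\lambda(p-Q(y))$, combines this with the definition of $K$-strong convexity, and lets $\lambda\to 0^+$; this is precisely the standard derivation of the quadratic-growth bound $h(p)\geqslant h(Q(y))+\langle y,\,p-Q(y)\rangle+\tfrac{K}{2}\|p-Q(y)\|^2$ that you invoke as a black box, so the two arguments are the same with different levels of granularity. For the second inequality you genuinely diverge: you appeal to the duality between $K$-strong convexity of $h$ and $(1/K)$-smoothness of $h^*$ (equivalently, the descent lemma for $h^*$ with $\nabla h^*=Q$), whereas the paper stays on the primal side, writing $F(p,y')=F(p,y)+\langle y'-y,\,Q(y)-p\rangle+\langle y'-y,\,Q(y')-Q(y)\rangle-D(Q(y'),Q(y))$ and then killing the last two terms with Young's inequality together with the bound $D(x',x)\geqslant\tfrac{K}{2}\|x'-x\|^2$ obtained from part 1 via Lemma \ref{Q=D}. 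Your version is shorter and leans on a classical conjugacy result whose hypotheses (closedness, properness, strong convexity of $h$ on the compact set $\Pi$, single-valuedness of $\nabla h^*=Q$) are indeed available here --- in fact the $1/K$-Lipschitz continuity of $Q$ recorded in Lemma \ref{convexproperty} already encodes the smoothness of $h^*$ you need; the paper's version is self-contained and avoids citing that equivalence, at the cost of the extra three-point manipulation. Both are sound.
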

And we assume that
\begin{equation}\label{reciprocitycondition}
    F(p, y_n) \to 0 \text{ whenever } Q(y_n ) \to p,
\end{equation}
which is called ``reciprocity condition'' in the theory of Bregman function \cite{reciprocitycondition}.

\subsection{Simultaneous perturbation stochastic approximation}

\par Now, what we need is to estimate the gradients. However, directly estimating the individual payoff gradients $v_i(\pi^t)$ from the historical information presents a formidable challenge for players. Nevertheless, under the Assumption \ref{assumptionintransition}, players are able to efficiently evaluate their policies and approximate the payoff function $V_i(\pi_i, \pi_{-i})$ provided they adhere to a consistent strategy $\pi_i$ over an extended sequence of moves. 
\par Once we get the approximate value of $V_i(\pi)$, following \cite{Spall} \cite{Onlineconvexoptimizationinthebanditsettinggradientdescentwithoutagradient} \cite{BanditlearninginconcaveN-persongames}, players can use the simultaneous perturbation stochastic approximation approach that allows them to estimate their individual payoff gradients based on their own payoffs only. In detail, this estimation process can be summarized in the following steps for each player $i \in N$:
\begin{enumerate}
    \item Choose a query radius $\delta > 0$.
    \item Determine the policy $\pi_i \in \Pi_i$ where player $i$ hopes to estimate her payoff gradient.
    \item Draw a vector $z_i$ from the unit sphere $\mathbb{S}_i$ of $ \mathbb{R}^{|S| \times |A_i|}$ uniformly and play $\hat{\pi}_i = \pi_i + \delta z_i$.
    \item Get the approximate value $\hat{V}_i$ of value function $V_i(\pi_i, \pi_{-i})$ and obtain
    \begin{equation}\label{SPSAestimator}
        \begin{aligned}
            \hat{v}_i = \frac{d_i}{\delta} \hat{V}_i z_i, \quad d_i \text{ is the dimension of } \Pi_i.
        \end{aligned}
    \end{equation}
\end{enumerate}
Using Stoke's theorem, we can show that $\hat{v}_i$ is an unbiased estimator of the individual gradient of the $\delta$-averaged payoff function:
\begin{align}\label{deltaaverage}
	V^{\delta}_i (\pi_i, \pi_{-i}) = \frac{1}{vol(\delta \mathbb{B}_i)\prod_{j\neq i}vol(\delta\mathbb{S}_j)} \int_{\delta \mathbb{B}_i} \int_{\prod_{j\neq i} \delta\mathbb{S}_j} u_i(\pi_i + w_i, \pi_{-i} + z_{-i}) \mathrm{d} z_1 \cdots \mathrm{d} w_i \cdots \mathrm{d} z_N,
\end{align}
where $\mathbb{B}_i$ is the unit ball of $ \mathbb{R}^{|S| \times |A_i|}$.
And the Lipschitz continuity of $v_i$ we have proved can guarantee that $\| \nabla_i V_i - \nabla_{i} V^{\delta}_i \|_{\infty} = \mathcal{O}(\delta) $.
\begin{lemma}\label{estamitorproperty}
    The estimator $\hat{v} = (\hat{v}_i)_{i \in N}$ given by \eqref{SPSAestimator} satisfies
    \begin{equation*}
        \begin{aligned}
            \mathbb{E}[\hat{v}_i] = \nabla_i V^{\delta}_i (\pi_i, \pi_{-i}),
        \end{aligned}
    \end{equation*}
    with $V^{\delta}_i (\pi_i, \pi_{-i})$ as in \eqref{deltaaverage}. Moreover, we have $\| \nabla_i V_i - \nabla_{i} V^{\delta}_i \|_{\infty} = \mathcal{O}(\delta) $ if $ \nabla_i V_i $ is Lipschitz continuous.
\end{lemma}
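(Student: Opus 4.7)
The plan is to prove the two claims in sequence, both via a standard reduction to the Flaxman--Kalai--McMahan divergence identity combined with the Lipschitz control from Theorem \ref{Lipschitzofgradient}.

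For the unbiasedness claim, I would decouple player $i$'s perturbation from the others' by introducing the auxiliary function
\[
g_i(x_i) \;:=\; \frac{1}{\prod_{j\neq i}\mathrm{vol}(\delta \mathbb{S}_j)} \int_{\prod_{j\neq i}\delta\mathbb{S}_j} V_i(x_i,\pi_{-i}+z_{-i})\,\mathrm{d} z_{-i},
\]
so that $V_i^\delta(\pi) = \frac{1}{\mathrm{vol}(\delta\mathbb{B}_i)}\int_{\delta\mathbb{B}_i} g_i(\pi_i+w_i)\,\mathrm{d} w_i$. Since $V_i$ extends to a smooth function on a neighbourhood of $\Pi$ (using the closed-form expression behind Theorem \ref{expression}), $g_i$ inherits the same regularity. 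Applying the classical divergence-theorem identity of \cite{Onlineconvexoptimizationinthebanditsettinggradientdescentwithoutagradient}, namely $\nabla_x \mathbb{E}_{w\sim U(\delta\mathbb{B})}[g(x+w)] = \tfrac{d}{\delta}\mathbb{E}_{z\sim U(\mathbb{S})}[g(x+\delta z)\,z]$, to $g_i$ in its $d_i$-dimensional ambient space gives
\[
\nabla_i V_i^\delta(\pi) \;=\; \frac{d_i}{\delta}\,\mathbb{E}_{z_i}\!\bigl[g_i(\pi_i+\delta z_i)\,z_i\bigr] \;=\; \frac{d_i}{\delta}\,\mathbb{E}_{z_i,z_{-i}}\!\bigl[V_i(\hat\pi_i,\hat\pi_{-i})\,z_i\bigr] \;=\; \mathbb{E}[\hat v_i],
\]
since in expectation $\hat V_i$ coincides with $V_i(\hat\pi_i,\hat\pi_{-i})$ (the finite-time estimation error is an independent zero-mean contribution that does not affect the gradient identity).

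For the $\mathcal{O}(\delta)$ bias bound, I would exchange derivative and integral — justified by dominated convergence together with the continuous differentiability of $V_i$ on a neighbourhood of $\Pi$ guaranteed by Theorem \ref{Lipschitzofgradient} — to obtain
\[
\nabla_i V_i^\delta(\pi) \;=\; \mathbb{E}_{w_i,z_{-i}}\!\bigl[\nabla_i V_i(\pi_i+w_i,\pi_{-i}+z_{-i})\bigr].
\]
Subtracting $\nabla_i V_i(\pi)$ and applying Lipschitz continuity of $\nabla_i V_i$ with constant $L$ from Theorem \ref{Lipschitzofgradient} yields
\[
\bigl\| \nabla_i V_i^\delta(\pi) - \nabla_i V_i(\pi) \bigr\|_\infty \;\leqslant\; L\,\mathbb{E}\bigl[\|(w_i,z_{-i})\|\bigr] \;\leqslant\; L\delta\sqrt{n},
\]
which is the claimed $\mathcal{O}(\delta)$ estimate.

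The main obstacle is not the algebra but the technical setup that makes the divergence identity applicable: one must extend $V_i$ smoothly beyond the simplex $\Pi$ so that the perturbed arguments $\pi_i+\delta z_i$ and $\pi_j+\delta z_j$ live in the domain of $V_i$, and one must ensure that this extension still has Lipschitz gradient. The natural route is to keep the closed-form expression $V_i(\pi)=\sum_s p_\pi(s)\sum_a(\prod_j \pi_j(a_j|s))r_i(s,a)$ with $p_\pi$ defined from the transition matrix formula \eqref{transition} as long as $P_\pi$ remains ergodic, which holds on an open neighbourhood of $\Pi$ by continuity; on this neighbourhood the Lipschitz bound from Theorem \ref{Lipschitzofgradient} persists (up to a change of constant). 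The feasibility of the played policies, which is a separate practical concern, is handled by restricting the base point $\pi$ to the $\delta$-interior of $\Pi$, as is standard for SPSA-type estimators.
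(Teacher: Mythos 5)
Your proposal follows essentially the same route as the paper: the Stokes'/divergence-theorem identity applied in player $i$'s variable (after integrating out the other players' spherical perturbations) gives unbiasedness, and differentiation under the integral sign combined with the Lipschitz bound $\|\nabla_i V_i(\pi_i+w_i,\pi_{-i}+z_{-i})-\nabla_i V_i(\pi)\|\leqslant L\sqrt{N}\delta$ gives the $\mathcal{O}(\delta)$ bias estimate. One small caveat: the finite-horizon estimation error in $\hat V_i$ is not zero-mean (Lemma \ref{differenceinvalue} only bounds it by $\mathcal{O}(e^{-T^t/\tau})$), but this is immaterial here because the lemma, exactly as in the paper's proof, concerns the idealized estimator with $\hat V_i = V_i(\hat\pi)$, the estimation error being handled separately as the bias term $\epsilon^t_i$ in \eqref{decomposition}.
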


\par Before we use the method above, there are some issues to clarify. The first thing is that as the perturbation direction $z_i$ is drawn from the unit sphere $\mathbb{S}_i$, it may fail to be tangent to $\Pi_i$, especially by our definition, all individual policies $\pi_i$ are in the surface of $\Pi_i$. Thus we need to build a equivalent representation of the policy $\pi_i$ which lie in a low-dimensional space and satisfy our need. Let
\begin{equation}
    \begin{aligned}
        \mathcal{X}_i = \{ x_i \in \mathbb{R}^{|S| \times (|A_i|-1)}: 0 \leqslant \sum_{a_i \in A_{i,last}} x_i(a_i,s) \leqslant 1, \forall s \in S, x_i \geqslant 0 \}.
    \end{aligned}
\end{equation}
where $A_{i,last}$ means the set $A_i$ deletes its last action. There is a natural bijection $M$ between $\Pi_i$ and $\mathcal{X}_i$, i.e., $M(\pi_i)(a_i, s) = x_i(a_i,s), \forall a_i \in A_{i,last}, \forall s \in S $. And it is easy to check that
\begin{equation}
    \begin{aligned}
        \frac{\partial}{\partial \pi_i(a_i|s)} V_i(\pi_i, \pi_{-i}) = & \frac{\partial}{\partial x_i(a_i,s)} V_i(x_i, x_{-i}), \forall a_i \in A_{i,last}, \forall s \in S, \\
        \frac{\partial}{\partial \pi_i(a_{i,last}|s)} V_i(\pi_i, \pi_{-i}) = & - \sum_{a_i \in A_{i,last}} \frac{\partial}{\partial x_i(a_i,s)} V_i(x_i, x_{-i}),
    \end{aligned}
\end{equation}
where $a_{i,last}$ is the action deleted in the set $A_{i,last}$.
So we can use $\mathcal{X}_i$ to represent the set of player $i$'s stationary policies and the interior of $\mathcal{X}_i$ is a closed convex set. If we let $x_i$ being in the interior of $\mathcal{X}_i$ and draw $z_i$ from $\mathbb{S}_i \subseteq \mathbb{R}^{|S| \times (|A_i|-1)}$, we can guarantee that $z_i$ is tangent to $\mathcal{X}_i$ and estimate the individual payoff gradient $\nabla_i V_i(x_i, x_{-i})$ using $\hat{x}_i = x_i + \delta z_i$. Subsequently, we can derive the gradient $\nabla_i V_i(\pi_i, \pi_{-i})$ from $\nabla_i V_i(x_i, x_{-i})$.
\par On the other hand, even when $z_i$ is a feasible direction of perturbation, the query point $\hat{x}_i = x_i + \delta z_i$ may not be in $\mathcal{X}_i$. To address this issue, we adopt the concept of a ``safety net``, as introduced in the work of \cite{agarwal2010optimal}.
\par Let $\mathbb{B}_{r_i}(p_i)$ be an $r_i$-ball centered at $p_i \in \mathcal{X}_i$ so that $\mathbb{B}_{r_i}(p_i) \subseteq \mathcal{X}_i$. Then instead of using the direction $z_i$, we consider the feasibility adjustment
\begin{equation}
    \begin{aligned}
        w_i = z_i - r_i^{-1} (x_i - p_i),
    \end{aligned}
\end{equation}
and player $i$ plays $\hat{x}_i = x_i + \delta w_i$. Noting that
\begin{equation*}
    \begin{aligned}
        \hat{x}_i = x_i + \delta w_i = (1 - r_i^{-1} \delta)x_i + r_i^{-1} \delta(p_i + r_i z_i),
    \end{aligned}
\end{equation*}
when $\delta/r_i < 1$, we have $\hat{x}_i \in \mathcal{X}_i$.
\subsection{A distributed learning algorithm in stochastic games}
\par After the preparation above, we introduce the following learning algorithm in stochastic games.
\begin{algorithm}[H]
	\caption{Multi-agent Distributed Learning Algorithm for Player $i$}     
	\label{1} 
	\begin{algorithmic}\label{algorithm}     
        \REQUIRE step-size $\gamma^t$, query radius $\delta^t$, safety ball $\mathbb{B}_{r_i}(p_i)$, time threshold $T^t$, mirror map $Q$
        \STATE Choose initial policy $\pi^0_i \in \Pi_i$ ($ x^0_i \in \mathcal{X}_i$ equivalently).
        \STATE set $Y^0_i = 0$.
        \FOR{ period $t = 0,1,2,\dots$ }
            \STATE draw $z^t_i$ uniformly from $\mathbb{S}^{|S|(|A_i|-1)}$
            \STATE let $w^t_i = z^t_i - r_i^{-1}(x_i^t - p_i)$
            \STATE play policy $\hat{x}^t_i = x^t_i + \delta^t w^t$ in the following $T^t$ stage, and get the one-step reward $\hat{V}^t_i$ at time $T^t+1$
            \STATE get the individual gradient $\hat{v}_i^t$ by the estimator of $\nabla_i V_i(x_i, x_{-i})$, in details
            \begin{equation}\label{definitionofhatv}
                \hat{v}_i^t (a_i,s) = 
                \begin{cases}
                    \frac{|S|(|A_i|-1)}{\delta^t} \hat{V}_i^t z^t_i(a_i,s) & \text{ if } a_i \neq a_{i,last}, \\
                    - \sum_{a'_i \neq a_i} \frac{|S|(|A_i|-1)}{\delta^t} \hat{V}_i^t z^t_i(a_i,s) & \text{ if } a_i = a_{i,last}.
                \end{cases}
            \end{equation}
            \STATE update $Y^{t+1}_i = Y^{t}_i + \gamma^t \hat{v}_i^t$
            \STATE update $\pi^{t+1}_i = Q(Y^{t+1}_i)$, and corresponding $x^{t+1}_i$
        \ENDFOR
	\end{algorithmic}
\end{algorithm}
In practice, each player's step size can vary, but for convenience, we assume that all players have the same step size in this paper.
\par The reason why players need a time threshold $T^t$ in the algorithm is that from any initial state $s$, after the players play a fixed policy $\hat{x}^t$ in few rounds $T^t$, the instantaneous reward $\hat{V}^t_i$ for player $i$ at time $T^t + 1$ will be close to the value function $V_i(\hat{x}^t)$ if the probability distribution over $S$ is not far from the stationary distribution $p_{\hat{x}^t}$ at that time.
\begin{lemma}\label{differenceinvalue}
    If Assumption \ref{assumptionintransition} holds and players play games by the Algorithm \ref{algorithm}, we have
    \begin{equation}
        \begin{aligned}
            | \mathbb{E}[\hat{V}^t_i] - V_i(\hat{x}^t) | \leqslant |S| (\max_{s,a} r_i(s,a)) e^{-\frac{T^t}{\tau}}.
        \end{aligned}
    \end{equation}
\end{lemma}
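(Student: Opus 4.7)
The plan is to view the one-step reward $\hat{V}_i^t$ obtained at time $T^t+1$ as an expectation over the state distribution reached after $T^t$ stages of playing the fixed policy $\hat{x}^t$, and then to invoke the uniform mixing bound in Assumption~\ref{assumptionintransition} to compare that distribution with the stationary distribution $p_{\hat{x}^t}$ that appears in $V_i(\hat{x}^t)$.

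First, I would let $w^0 \in \Delta(S)$ denote the distribution of the state at the moment the players begin executing the perturbed policy $\hat{x}^t$ for period $t$. Since the induced chain has transition matrix $P_{\hat{x}^t}$ given by \eqref{transition}, the state distribution at the stage when the recorded reward $\hat{V}_i^t$ is generated is $w^{T^t} = w^0 P_{\hat{x}^t}^{T^t}$. Writing $R_i(s) = \sum_{a} \bigl(\prod_{j \in N} \hat{x}_j^t(a_j|s)\bigr) r_i(s,a)$ for the expected one-shot reward in state $s$ under $\hat{x}^t$, both quantities of interest decompose cleanly as
\[
    \mathbb{E}[\hat{V}_i^t] = \sum_{s \in S} w^{T^t}(s) R_i(s), \qquad V_i(\hat{x}^t) = \sum_{s \in S} p_{\hat{x}^t}(s) R_i(s),
\]
so that, by H\"older's inequality,
\[
    \bigl|\mathbb{E}[\hat{V}_i^t] - V_i(\hat{x}^t)\bigr| \leqslant \max_{s \in S} |R_i(s)| \cdot \|w^{T^t} - p_{\hat{x}^t}\|_1,
\]
and clearly $\max_s |R_i(s)| \leqslant \max_{s,a} r_i(s,a)$.

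Next, because $p_{\hat{x}^t}$ is stationary for $P_{\hat{x}^t}$, iterating the contraction in Assumption~\ref{assumptionintransition} $T^t$ times produces
\[
    \|w^{T^t} - p_{\hat{x}^t}\|_1 = \|(w^0 - p_{\hat{x}^t}) P_{\hat{x}^t}^{T^t}\|_1 \leqslant e^{-T^t/\tau} \|w^0 - p_{\hat{x}^t}\|_1.
\]
The coordinate-wise bound $|w^0(s) - p_{\hat{x}^t}(s)| \leqslant 1$ then yields $\|w^0 - p_{\hat{x}^t}\|_1 \leqslant |S|$, and combining the three displays above delivers exactly the stated inequality.

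I do not anticipate any serious obstacle, since the statement is essentially a quantitative mixing estimate dressed up in game-theoretic notation. The only subtlety worth tracking is that $w^0$ is itself a random object measurable with respect to the history accumulated before period $t$; however, because both the geometric contraction from Assumption~\ref{assumptionintransition} and the reward bound $\max_{s,a} r_i(s,a)$ hold uniformly in $w^0$, taking the outer expectation over the prior randomness preserves the final inequality without requiring any further care.
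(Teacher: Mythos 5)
Your proof is correct and follows essentially the same route as the paper's: both decompose $\mathbb{E}[\hat{V}^t_i]$ and $V_i(\hat{x}^t)$ against the expected one-step reward vector, apply H\"older, and iterate the contraction in Assumption~\ref{assumptionintransition} $T^t$ times before bounding the initial $\ell_1$-distance by $|S|$. The only cosmetic difference is that the paper starts from the point mass $e_{s^t}$ rather than a general initial distribution $w^0$; your remark about measurability of $w^0$ with respect to the prior history is a harmless extra precaution.
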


\section{Convergence Analysis and Results}
\label{Convergenceanalysisandresults}

\par If all players use the Algorithm \ref{algorithm} in stochastic games, we can construct the following stochastic process to describe their behaviors:
\begin{equation} \label{coreequation}
\begin{aligned}
    \hat{x}^t &= M(\pi^t) + \delta^t w^t,\\
    Y^{t+1} &= Y^t + \gamma^t \hat{v}^t, \\
    \pi^{t+1} &= Q(Y^{t+1}).
\end{aligned}
\end{equation}
In the above, $w^t = (w^t_i)_{i \in N}$ and $\hat{v}^t = (\hat{v}^t_i)_{i \in N}$ have the following expressions respectively:
\begin{align}
    w^t_i = z^t_i - r_i^{-1}(x^t_i - p_i), \quad \hat{v}^t_i = \frac{|S|(|A_i|-1)}{\delta^t} \hat{V}^t_i \cdot F_i z^t_i,
\end{align}
where
\begin{align*}
    F_i =& \left(
    \begin{array}{ccc}
         G_i &  & \\
         & \ddots & \\
         &  & G_i
    \end{array}
    \right)_{(|S| \times |A_i|) \times (|S| \times (|A_i|-1))}, \\
    G_i =& \left(
    \begin{array}{cccc}
        1 & & \\
        &\ddots& \\
         &  & 1\\
         -1 & \ldots & -1
    \end{array}
    \right)_{|A_i| \times (|A_i|-1)}.
\end{align*}
\par For the estimator $\hat{v}^t_i$, as we have proved in Lemma \ref{estamitorproperty}
\begin{equation*}
    \mathbb{E}_{z_i^t} [\frac{|S|(|A_i|-1)}{\delta^t} \hat{V}^t_i \cdot z^t_i] = \nabla_i V^{\delta}_i (x_i, x_{-i}),
\end{equation*}
we obtain that
\begin{equation}
    \mathbb{E}[\hat{v}^t_i] = \mathbb{E}_{z_i^t} [\frac{|S|(|A_i|-1)}{\delta^t} \hat{V}^t_i \cdot F_i z^t_i] = F_i \nabla_i V^{\delta}_i (x_i, x_{-i}).
\end{equation}
Thus, we can write
\begin{equation}\label{decomposition}
    \begin{aligned}
        \hat{v}^t_i =& \nabla_{i} V_i(\pi^t_i, \pi^t_{-i}) + (F_i \nabla_{i} V_i^{\delta^t}(x^t_{i,\delta^t}, x^t_{-i,\delta^t})- \nabla_{i} V_i(\pi^t_i, \pi^t_{-i})) \\
        &+ F_i (\frac{|S|(|A_i|-1)}{\delta^t} V_i(\hat{x}^t) \cdot z^t_i -\mathbb{E}_{z^t_i}[ \frac{|S|(|A_i|-1)}{\delta^t} V_i(\hat{x}^t) \cdot z^t_i])
        \\
        &+ F_i ( \frac{|S|(|A_i|-1)}{\delta^t} \hat{V}^t_i \cdot z^t_i - \frac{|S|(|A_i|-1)}{\delta^t} V_i(\hat{x}^t) \cdot z^t_i) \\
        \triangleq & \nabla_{i} V_i(\pi^t_i, \pi^t_{-i})  + b^t_i + U^t_i + \epsilon^t_i,
    \end{aligned}
\end{equation}
where in the stochastic approximation process, $b^t_i$ and $\epsilon^t_i$ are bias terms, while $U^t_i$ is the noise term.
\par Having completed the preceding preparations, we are now in a position to analyze the convergence of the process \eqref{coreequation}. As this is a learning process in games, it is hoped that it will converge to some Nash equilibrium. However, as it is of PPAD-complete complexity to find a Nash equilibrium even in finite games \cite{PPADhardfinite} and stochastic games with discounted payoffs \cite{deng_complexity_2022}, it is therefore not to be expected that our algorithm will converge to some Nash equilibrium in all stochastic games with long-run average payoffs. However, the system will converge to the Nash equilibrium globally if we assume that the game have some great properties.

\begin{definition}[Neutrally stable\cite{mertikopoulos_unified_2024}]
    A policy profile $\pi^*$ is globally neutrally stable if
    \begin{equation}
         \langle v(\pi), \pi - \pi^*  \rangle \leqslant 0, \forall \pi \in \Pi,
    \end{equation}
    where $v(\pi) = (v_i(\pi))_{i \in N}$ is the individual payoff gradient. Furthermore, if the equality holds only when $\pi$ is a Nash equilibrium, we say the policy profile $\pi^*$ is globally variationally stable.
\end{definition}

\par A classic type of games is monotone games \cite{Rosen} which can be defined as follows in stochastic games.
\begin{example}[Monotone games]\label{monotone}
    A stochastic game $\mathcal{G}$ is a monotone game if 
    \begin{equation}
        \langle v(\pi) - v(\pi') , \pi - \pi' \rangle \leqslant 0, \forall \pi,\pi' \in \Pi_i.
    \end{equation}
    Based on the first-order stationary property of Nash equilibria, any Nash equilibrium in monotone games is globally neutrally stable.
\end{example}
\par The following convergence theorem holds when Nash equilibria possess the aforementioned stability conditions.
\begin{theorem}\label{convergencetheorem}
    Suppose that in stochastic games, all Nash equilibria are globally neutrally stable and a globally variationally stable Nash equilibrium exists. If all players employ Algorithm \ref{algorithm} with parameters having the following property
    \begin{align*}
        \lim_{t \to \infty} \gamma^t =& \lim_{t \to \infty} \delta^t = 0, \quad \sum_{t=0}^{\infty} \gamma^t = \infty, \quad \sum_{t=0}^{\infty} \gamma^t \delta^t < \infty, \quad \sum_{t=0}^{\infty} (\frac{\gamma^t}{\delta^t})^2 < \infty, \quad
        \sum_{t=0}^{\infty} \frac{\gamma^t}{\delta^t} e^{- \frac{T^t}{\tau}} < \infty,
    \end{align*}
    then the sequence of realized actions $\hat{\pi}^t$ converges to Nash equilibrium with probability $1$.
\end{theorem}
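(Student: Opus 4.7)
The plan is to use the Fenchel coupling as a stochastic Lyapunov function, exploiting the variational/neutral stability of Nash equilibria. Fix a globally variationally stable Nash equilibrium $\pi^*$ whose existence is assumed. By Lemma \ref{propertyofFenchel}(2) applied at each step of \eqref{coreequation}, we get the energy inequality
\begin{equation*}
    F(\pi^*, Y^{t+1}) \leqslant F(\pi^*, Y^t) + \gamma^t \langle \hat{v}^t, \pi^t - \pi^* \rangle + \tfrac{(\gamma^t)^2}{2K} \| \hat{v}^t \|^2.
\end{equation*}
The first strategic move is to substitute the decomposition \eqref{decomposition} of $\hat{v}^t$ into this inequality. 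The dominant term is $\langle v(\pi^t), \pi^t - \pi^* \rangle$, which is non-positive by the global neutral stability of $\pi^*$, so it acts as a descent term; all other contributions must be shown to be summable (deterministically, or almost surely via martingale arguments).

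The second step is to bound each error source using the prescribed step-size schedule. The smoothing bias $b^t$ satisfies $\|b^t\| = O(\delta^t)$ by Lemma \ref{estamitorproperty}, so $\sum_t \gamma^t \|b^t\| \lesssim \sum_t \gamma^t \delta^t < \infty$. The truncation bias $\epsilon^t$ inherits from Lemma \ref{differenceinvalue} a bound $\|\epsilon^t\| = O(e^{-T^t/\tau}/\delta^t)$, so $\sum_t \gamma^t \|\epsilon^t\| < \infty$ by the fourth summability condition. The noise term $U^t$ is a martingale difference conditionally on the history (its conditional mean vanishes by construction of the SPSA estimator), with conditional second moment $\mathbb{E}[\|U^t\|^2 | \mathcal{F}_t] = O(1/(\delta^t)^2)$ because $\hat{V}^t_i$ is uniformly bounded and $\|z_i\|=1$. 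Combined with $\|\hat{v}^t\|^2 = O(1/(\delta^t)^2)$ this yields $\sum_t (\gamma^t)^2 \mathbb{E}[\|\hat{v}^t\|^2] \lesssim \sum_t (\gamma^t/\delta^t)^2 < \infty$, and the martingale $\sum_t \gamma^t \langle U^t, \pi^t - \pi^* \rangle$ converges almost surely by the $L^2$ convergence theorem for martingales.

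The third step is to invoke a Robbins--Siegmund type supermartingale convergence result on the energy inequality. Since the non-positive term $\gamma^t \langle v(\pi^t), \pi^t - \pi^* \rangle$ is the one providing contraction and all remaining terms are almost-surely summable, Robbins--Siegmund gives simultaneously: (i) $F(\pi^*, Y^t)$ converges almost surely to a finite limit $F_\infty \geqslant 0$; and (ii) $\sum_t \gamma^t \langle v(\pi^t), \pi^* - \pi^t \rangle < \infty$ almost surely. Because $\sum_t \gamma^t = \infty$, claim (ii) forces $\liminf_t \langle v(\pi^t), \pi^* - \pi^t \rangle = 0$, and by continuity of $v$ (guaranteed by Theorem \ref{Lipschitzofgradient}) there is an accumulation point $\bar\pi$ of $\pi^t$ with $\langle v(\bar\pi), \pi^* - \bar\pi \rangle = 0$. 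The global variational stability of $\pi^*$ then forces $\bar\pi$ itself to be a Nash equilibrium.

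The delicate final step, which I expect to be the main obstacle, is upgrading ``an accumulation point is Nash'' to ``$\pi^t$ converges to a single Nash equilibrium.'' Here the assumption that \emph{every} Nash equilibrium is globally neutrally stable is essential: rerunning the Lyapunov argument above with $\bar\pi$ in place of $\pi^*$ shows that $F(\bar\pi, Y^t)$ likewise converges almost surely. Along the subsequence realizing $\pi^{t_k} \to \bar\pi$, we have $Q(Y^{t_k}) \to \bar\pi$, so the reciprocity condition \eqref{reciprocitycondition} gives $F(\bar\pi, Y^{t_k}) \to 0$; by almost-sure convergence of the full sequence $F(\bar\pi, Y^t)$, we conclude $F(\bar\pi, Y^t) \to 0$, and by Lemma \ref{propertyofFenchel}(1) this implies $\pi^t = Q(Y^t) \to \bar\pi$. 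Finally, since $\delta^t \to 0$ and $w^t$ is uniformly bounded, $\hat\pi^t$ shares the same limit as $\pi^t$, completing the almost-sure convergence of the realized policy profile to a Nash equilibrium.
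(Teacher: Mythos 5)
Your proposal is correct and follows essentially the same route as the paper: the Fenchel coupling as a stochastic Lyapunov function, the decomposition \eqref{decomposition} of $\hat{v}^t$ with the bias/noise bounds from Lemmas \ref{estamitorproperty} and \ref{differenceinvalue}, supermartingale convergence of $F(\pi^*,Y^t)$, extraction of a subsequence converging to a Nash equilibrium, and the reciprocity condition \eqref{reciprocitycondition} to upgrade subsequential to full convergence. The only (harmless) difference is that you obtain the Nash-convergent subsequence via the Robbins--Siegmund conclusion $\sum_t \gamma^t \langle v(\pi^t), \pi^* - \pi^t\rangle < \infty$ together with $\sum_t \gamma^t = \infty$, whereas the paper argues by contradiction that a trajectory bounded away from $\Pi^*$ would force $F(\pi^*,Y^t)\to -\infty$; both arguments rest on the same negative-drift estimate, and your final remark that $\hat\pi^t$ inherits the limit of $\pi^t$ because $\delta^t \to 0$ is a detail the paper leaves implicit.
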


\par It is shown \cite{Rosen} that, strict monotone games admit a unique Nash equilibrium and it surely possesses the globally variationally stability. So we have the following corollary for strict monotone games.
\begin{corollary}
    The sequence of realized actions $\hat{\pi}^t$ converges to the Nash equilibrium with probability $1$ in strict monotone games if the assumption of Theorem \ref{convergencetheorem} holds.
\end{corollary}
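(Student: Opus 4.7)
The plan is to deduce this corollary directly from Theorem \ref{convergencetheorem} by verifying its two structural hypotheses for strict monotone games: namely, that every Nash equilibrium is globally neutrally stable, and that at least one globally variationally stable Nash equilibrium exists. The parametric hypotheses on $\gamma^t$, $\delta^t$, and $T^t$ are copied verbatim from Theorem \ref{convergencetheorem}, so nothing has to be re-checked there.

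First I would invoke Rosen's classical result, cited immediately before the corollary, to obtain a unique Nash equilibrium $\pi^*$ in the strict monotone stochastic game $\mathcal{G}$. Because $\pi^*$ is a Nash equilibrium, Theorem \ref{FirstorderstationarypoliciesareNash} gives the first-order stationarity inequality $\langle v(\pi^*), \pi - \pi^* \rangle \leqslant 0$ for every $\pi \in \Pi$. Next I would invoke strict monotonicity in the form
\begin{equation*}
    \langle v(\pi) - v(\pi^*), \pi - \pi^* \rangle < 0 \quad \text{for all } \pi \neq \pi^*,
\end{equation*}
and add the two inequalities to conclude $\langle v(\pi), \pi - \pi^* \rangle < 0$ for every $\pi \neq \pi^*$, with equality holding only at $\pi = \pi^*$. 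This is exactly the statement that $\pi^*$ is globally variationally stable in the sense of the definition following Theorem \ref{convergencetheorem}, and in particular globally neutrally stable.

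Since $\pi^*$ is the only Nash equilibrium in the game, the condition ``all Nash equilibria are globally neutrally stable'' reduces to the single statement that $\pi^*$ is globally neutrally stable, which we have just verified; simultaneously, the existence of a globally variationally stable equilibrium is witnessed by the same $\pi^*$. Therefore both hypotheses of Theorem \ref{convergencetheorem} are satisfied, and applying the theorem yields the desired almost-sure convergence of the realized action sequence $\hat{\pi}^t$ to the (unique) Nash equilibrium.

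Because the corollary is essentially a specialization of Theorem \ref{convergencetheorem}, there is no genuinely difficult step; the only subtlety worth highlighting in the write-up is the implication ``strict monotonicity $+$ first-order stationarity $\Rightarrow$ global variational stability'', since this is where the qualifier \emph{strict} is actually used. Care should be taken to state the definition of strict monotonicity unambiguously (strict inequality for $\pi \neq \pi'$), so that the equality case in the variational stability condition is genuinely isolated at $\pi = \pi^*$ rather than merely at critical points.
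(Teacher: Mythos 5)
Your proposal is correct and follows essentially the same route as the paper, which justifies the corollary by citing Rosen for uniqueness of the Nash equilibrium in strict monotone games and its global variational stability; your derivation of that stability (adding the first-order stationarity inequality from Theorem \ref{FirstorderstationarypoliciesareNash} to the strict monotonicity inequality) is exactly the argument the paper sketches for the non-strict case in Example \ref{monotone}, sharpened to strict inequality. Nothing is missing.
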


\par The proof of Theorem \ref{convergencetheorem} is divided into three steps as follows. First, for any globally neutrally stable Nash equilibrium $\pi^*$, we will demonstrate that $F(\pi^*, Y^t)$ converges to a finite random variable $F^{\infty}$. Then we will show that there exists a subsequence $\{ \pi^{t_k} \}$ which will converges to some Nash equilibrium if there exists a globally stable Nash equilibrium. Then the proof of Theorem \ref{convergencetheorem} can be completed with these two facts as well as the properties of the Fenchel coupling. The details of the proof are provided in the appendix.

\par It is necessary to choose the appropriate parameters when implementing the algorithm. We can consider parameters of the following form $\gamma^t = \gamma/ t^p$, $\delta^t = \delta/t^q$ with $\gamma$, $\delta > 0$ and $0 < p,q \leqslant 1$. In order to satisfy the assumptions of the Theorem \ref{convergencetheorem}, we need $p+q > 1$ and $p-q > 1/2$. For $T^t$, when $\tau$ is known to the players, we can let $T^t = [T\log t]+1$ with $T>0$ and $p - q + T/\tau > 1$. Otherwise, we can let $T^t = [t^T]+ 1$ with $T > 0$. A suitable pair of parameters is $(\gamma^t, \delta^t, T^t) = (1, 1/3, \sqrt{t})$.

\section{Discussion}
\par In this work, we demonstrate that the individual payoff gradients for stochastic games with long-run average payoffs exhibit Lipschitz continuity and fulfill the gradient dominance property, a feature that is also observed in the context of discounted payoffs. Based on these facts observed, we have developed a promising learning algorithm predicated on policy gradient estimation. The algorithm ensures that, provided the games possess a globally variationally stable Nash equilibrium with all Nash equilibria globally neutrally stable, the system will converge to some Nash equilibrium with probability one, if all players employ this approach.
\par However, our current analysis only confirms the global asymptotic convergence of the algorithm for certain special cases within game theory. The investigation into non-asymptotic convergence properties is still pending. Additionally, the applicability of the algorithm to a broader spectrum of stochastic games, including zero-sum games, is an open question. Moreover, it is yet to be determined if the algorithm exhibits a high probability of converging to the Nash equilibrium when the initial policy profile is in close proximity to it.

\bibliographystyle{plain}
\bibliography{ref}

\appendix

\section{Proof of Section \ref{PropertiesofIndividualPayoffFunctions}}

\begin{proof}[Proof of Lemma \ref{advbound}]
    We denote that $e_s \in \Delta(S)$ with its $s$-th element being $1$ and others being $0$, and $R^{\pi}_i$ as the expected one-step reward vector, i.e., $R^{\pi}_{i}(s) = \sum_{a} (\prod_{i=1}^{n} \pi_i(a_i|s)) r_i(s,a_i,a_{-i})$. We can observe
    \begin{equation}\label{advantage}
        \begin{aligned}
        \advantage^{\pi}_i(s,a) =& \sum_{t=0}^{\infty} \mathbb{E}[r_i(s^t,a^t) - V_i(\pi) | s_0 = s, a_0 = a, \pi] \\
        =& r_i(s,a) - V_i(\pi) + \sum_{t=1}^{\infty} [e_s P^t_{\pi} R^{\pi}_i - p_{\pi} R^{\pi}_i] \\
        =& r_i(s,a) - p_{\pi} R^{\pi}_i + \sum_{t=1}^{\infty} (e_s P^t_{\pi} - p_{\pi} P^t_{\pi}) R^{\pi}_i
        \end{aligned}
    \end{equation}
    Thus, we have
    \begin{align*}
        | \advantage^{\pi}_{i}(s,a) | \leqslant & \sum_{t=0}^{\infty} \| (e_s- p_{\pi}) P_{\pi}^t \|_{1} \| R^{\pi}_i \|_{\infty} \\
        \leqslant & \|e_s- p_{\pi}\|_{1} (\max_{s,a} r_i(s,a) )\sum_{t=0}^{\infty} e^{-\frac{t}{\tau}} \\
        \leqslant & 2 (\max_{s,a} r_i(s,a) )\sum_{t=0}^{\infty} e^{-\frac{t}{\tau}} < \infty
    \end{align*}
    As for the average advantage functions, we get
    \begin{align*}
        | \overline{\advantage}^{\pi}_{i}(s,a_i)  | = &  |\mathbb{E}_{a_{-i} \sim \pi_{-i}(\cdot | s)} \big[ \advantage^{\pi}_i(s, (a_i,a_{-i}))  \big]| \\
        \leqslant & \max_{s,a} \advantage^{\pi}_i(s,a) \\
        \leqslant & 2 (\max_{s,a} r_i(s,a) )\sum_{t=0}^{\infty} e^{-\frac{t}{\tau}} < \infty.
    \end{align*}
\end{proof}

\begin{proof}[Proof of Theorem \ref{expression}]
    By observation, we have
    \begin{equation}
        \begin{aligned}
            \nabla_i \bigg[ 
            \sum_{a_i \in A_i} \pi_i(a_i| s) \overline{\advantage}^\pi_i(s, a_i)
    \bigg]
    =\sum_{a_i \in A_i} \big( \nabla_i \pi_i(a_i|s)   \overline{\advantage}^\pi_i(s, a_i)\big) + \sum_{a_i \in A_i} 
 \pi_i(a_i|s) \big(  \nabla_i   \overline{\advantage}^\pi_i(s, a_i)\big).
        \end{aligned}
    \end{equation}
    And we get
    \begin{equation}
        \begin{aligned}
            \nabla_i   \overline{\advantage}^\pi_i(s, a_i) =& \nabla_i \bigg[
            \sum_{a_{-i} \in A_{-i}} \pi_{-i}(a_{-i}| s) \advantage^{\pi}_i(s,(a_i,a_{-i}))
            \bigg], \\
            =& \sum_{a_{-i} \in A_{-i}} \pi_{-i}(a_{-i}| s) \nabla_i  \advantage^{\pi}_i(s,(a_i,a_{-i})).
        \end{aligned}
    \end{equation}
    At the same time, we can observe
    \begin{equation}\label{advdecomposition}
    \begin{aligned}
        \advantage^{\pi}_i(s,(a_i,a_{-i})) =& \sum_{t=0}^{\infty} \mathbb{E}\bigg[ 
        r_i(s^t,a^t) - V_i(\pi) | s^0 = s, a^0 = (a_i,a_{-i}), \pi
        \bigg], \\
        =& r_i(s,a_i,a_{-i}) - V_i(\pi) + \sum_{s' \in S} \mathbb{P}(s' | s, (a_i,a_{-i}))\sum_{a' \in A} \pi(a'|s') \advantage^{\pi}_{i}(s',a'), \\
        =& r_i(s,a_i,a_{-i}) - V_i(\pi) + \sum_{s' \in S} \mathbb{P}(s' | s, (a_i,a_{-i}))\sum_{a'_i \in A_i} \pi_i(a'_i|s') \overline{\advantage}^{\pi}_{i}(s',a'_i),
    \end{aligned}
    \end{equation}
    thus we have
    \begin{equation}
        \begin{aligned}
            \nabla_i  \advantage^{\pi}_i(s,(a_i,a_{-i})) =& -\nabla_i V_i(\pi) + \sum_{s' \in S} \mathbb{P}(s' | s, (a_i,a_{-i})) \nabla_i \bigg[ \sum_{a'_i \in A_i} \pi_i(a'_i|s') \overline{\advantage}^{\pi}_{i}(s',a'_i) \bigg].
        \end{aligned}
    \end{equation}
    Combining the equations above, we obtain that
    \begin{equation}
        \begin{aligned}
            \nabla_i V_i(\pi) =& \sum_{a_i \in A_i} \big( \nabla_i \pi_i(a_i|s)   \overline{\advantage}^\pi_i(s, a_i)\big) \\
            &+ \sum_{a \in A} \pi(a | s) \sum_{s' \in S} \mathbb{P}(s' | s, (a_i,a_{-i})) \nabla_i \bigg[ \sum_{a'_i \in A_i} \pi_i(a'_i|s') \overline{\advantage}^{\pi}_{i}(s',a'_i) \bigg] \\
            &- \nabla_i \bigg[ 
            \sum_{a_i \in A_i} \pi_i(a_i| s) \overline{\advantage}^\pi_i(s, a_i)
    \bigg],
        \end{aligned}
    \end{equation}
    Summing both sides over the stationary distribution $p_{\pi}$, and by the property of stationary distributions, we have
    \begin{equation*}
        \begin{aligned}
            \sum_{s\in S} p_{\pi}(s) \nabla_i V_i(\pi) =& \sum_{s\in S} p_{\pi}(s) \sum_{a_i \in A_i} \big( \nabla_i \pi_i(a_i|s)   \overline{\advantage}^\pi_i(s, a_i)\big) \\
            &+ \sum_{s\in S} p_{\pi}(s) \sum_{a \in A} \pi(a | s) \sum_{s' \in S} \mathbb{P}(s' | s, (a_i,a_{-i})) \nabla_i \bigg[ \sum_{a'_i \in A_i} \pi_i(a'_i|s') \overline{\advantage}^{\pi}_{i}(s',a'_i) \bigg] \\
            &- \sum_{s\in S} p_{\pi}(s) \nabla_i \bigg[ 
            \sum_{a_i \in A_i} \pi_i(a_i| s) \overline{\advantage}^\pi_i(s, a_i)
    \bigg]\\
    =& \sum_{s\in S} p_{\pi}(s) \sum_{a_i \in A_i} \big( \nabla_i \pi_i(a_i|s)   \overline{\advantage}^\pi_i(s, a_i)\big) \\
            &+ \sum_{s'\in S} p_{\pi}(s')  \nabla_i \bigg[ \sum_{a'_i \in A_i} \pi_i(a'_i|s') \overline{\advantage}^{\pi}_{i}(s',a'_i) \bigg] \\
            &- \sum_{s\in S} p_{\pi}(s) \nabla_i \bigg[ 
            \sum_{a_i \in A_i} \pi_i(a_i| s) \overline{\advantage}^\pi_i(s, a_i)
    \bigg] \\
    =& \sum_{s\in S} p_{\pi}(s) \sum_{a_i \in A_i} \big( \nabla_i \pi_i(a_i|s)   \overline{\advantage}^\pi_i(s, a_i)\big).
        \end{aligned}
    \end{equation*}
    It can be demonstrated that the lemma is true without delay, given that $\nabla_i V_i(\pi)$ is independent of state $s$.

\end{proof}

\par To substantiate Theorem \ref{Lipschitzofgradient}, it is essential to cite the following lemma:

\begin{lemma}\label{prodx}
    For any real vectors $x$, $y \in \mathbb{R}^n$, with $x_i, y_i \in [0,1], \forall i = 1, \dots,n$, we have \[| \prod_{i=1}^{n} x_i - \prod_{i=1}^{n} y_i | \leqslant (2^n -1 ) \| x-y \|_{\infty}.\]
\end{lemma}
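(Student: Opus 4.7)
The lemma is elementary and the stated bound $(2^n-1)$ is in fact generous; a simple telescoping already delivers the tighter bound $n\|x-y\|_\infty$, but either induction or telescoping suffices for the stated inequality. I would present the argument by induction on $n$ to match the multiplicative flavor of the constant $2^n-1$.

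For the base case $n=1$, the inequality reduces to $|x_1-y_1|\le\|x-y\|_\infty$, which is an equality, and $2^1-1=1$. For the inductive step, writing $x=(x',x_n)$ and $y=(y',y_n)$ with $x',y'\in[0,1]^{n-1}$, I would use the one-line identity
\[
\prod_{i=1}^n x_i-\prod_{i=1}^n y_i \;=\; x_n\Bigl(\prod_{i=1}^{n-1}x_i-\prod_{i=1}^{n-1}y_i\Bigr) \;+\; (x_n-y_n)\prod_{i=1}^{n-1}y_i,
\]
then take absolute values, use $|x_n|\le 1$ and $|\prod_{i=1}^{n-1}y_i|\le 1$ (since each $y_i\in[0,1]$), apply the inductive hypothesis to the first term, and bound $|x_n-y_n|\le\|x-y\|_\infty$ in the second. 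This produces
\[
\Bigl|\prod_{i=1}^n x_i-\prod_{i=1}^n y_i\Bigr| \;\le\; (2^{n-1}-1)\|x-y\|_\infty + \|x-y\|_\infty \;=\; 2^{n-1}\|x-y\|_\infty,
\]
and since $2^{n-1}\le 2^n-1$ for every $n\ge 1$, the desired bound follows. A small care point is that the inductive hypothesis is applied to $x',y'\in\mathbb{R}^{n-1}$, whose $\|\cdot\|_\infty$ is bounded above by the $\|\cdot\|_\infty$ of the full $n$-dimensional vectors, so the inequality is preserved in the required direction.

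There is essentially no main obstacle here; the only thing to be careful about is not forgetting that both $|x_i|$ and $|y_i|$ are in $[0,1]$ (used to bound the factors $x_n$ and $\prod y_i$ by $1$). An alternative presentation, which I would mention briefly, is the direct telescoping
\[
\prod_{i=1}^n x_i-\prod_{i=1}^n y_i \;=\; \sum_{k=1}^{n} x_1\cdots x_{k-1}\,(x_k-y_k)\,y_{k+1}\cdots y_n,
\]
which immediately yields the stronger bound $n\|x-y\|_\infty$ and hence certainly $(2^n-1)\|x-y\|_\infty$; this is perhaps the cleanest proof, and the induction version is included mainly to track the exponential constant that subsequent arguments (e.g.\ those invoking the joint-action factorization $\prod_i\pi_i(a_i|s)$) may find convenient.
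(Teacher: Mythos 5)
Your proof is correct, but it follows a genuinely different route from the paper's. The paper obtains the constant $2^n-1$ by fully expanding $\prod_i\bigl((x_i-y_i)+y_i\bigr)-\prod_i y_i$ into the $2^n-1$ nonempty-subset terms $\prod_{i\in S}(x_i-y_i)\prod_{i\notin S}y_i$, each of which contains at least one factor $(x_i-y_i)$ and has all remaining factors bounded by $1$ in absolute value; the count of terms is exactly where the exponential constant comes from. You instead use the two-term splitting $\prod x_i-\prod y_i = x_n(\prod_{i<n}x_i-\prod_{i<n}y_i)+(x_n-y_n)\prod_{i<n}y_i$ and induct (or, equivalently, telescope), which is both shorter and yields the strictly sharper constants $2^{n-1}$ (as you run the induction) and $n$ (via the telescoping sum) --- your observation that $\|x'-y'\|_\infty\leqslant\|x-y\|_\infty$ for the truncated vectors is the only point where care is needed, and you handle it. Since the lemma is only used in the paper to establish Lipschitz constants whose exact value is immaterial (they are absorbed into $C$, $\alpha_k$, $\beta$, $\gamma$ in the proof of Theorem \ref{Lipschitzofgradient}), replacing $2^n-1$ by $n$ would in fact tighten every downstream constant at no cost; the paper's expansion buys nothing beyond matching the constant it states.
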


\begin{proof}
    By direct calculation, we have
    \begin{align*}
        | \prod_{i=1}^{n} x_i - \prod_{i=1}^{n} y_i | =& | \prod_{i=1}^{n} (x_i - y_i + y_i) - \prod_{i=1}^{n} y_i | \\
        =& | \prod_{i=1}^{n} (x_i - y_i) + \sum_{j=1} y_j \prod_{i \neq j} (x_i - y_i) + \cdots + \sum_{j=1}(x_j - y_j) \prod_{i \neq j} y_j  | \\ 
        \leqslant & | \prod_{i=1}^{n} (x_i - y_i)| + \sum_{j=1} y_j \prod_{i \neq j} (x_i - y_i)| + \cdots + |\sum_{j=1}(x_j - y_j) \prod_{i \neq j} y_j  | \\ 
        \leqslant& (2^n -1) \max_{i \in N}|x_i - y_i| \\
        \leqslant& (2^n -1) \| x-y \|_{\infty}.
    \end{align*}
\end{proof}

\par In order to demonstrate the Lipschitz continuity of $v_i(\pi)$, it is first necessary to establish the following property of the transition matrices $P_{\pi}$.

\begin{lemma}\label{Pt-Pt'}
    For all $t \geqslant 1$ and $w$, $w' \in \Delta(S)$, if $\pi$, $\pi'$ are any policy profiles and induce the transition matrices $P_{\pi}$, $P_{\pi'}$, we have
    \begin{equation}
        \begin{aligned}
            \| (w - w' ) ( P^{t}_{\pi} - P^{t}_{\pi'} ) \|_{1} 
            \leqslant & (2^N -1) t e^{- \frac{t-1}{\tau}} |A| |S| \| \pi - \pi' \|_{\infty} \| w - w' \|_{1}\\
            \triangleq & C t e^{- \frac{t-1}{\tau}}\| \pi - \pi' \|_{\infty} \| w - w' \|_{1}
        \end{aligned}
    \end{equation}
\end{lemma}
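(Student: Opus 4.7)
The plan is a standard telescoping-and-mixing argument. I would begin with the identity
\[
P^{t}_{\pi} - P^{t}_{\pi'} \;=\; \sum_{k=0}^{t-1} P^{k}_{\pi}\,(P_{\pi} - P_{\pi'})\,P^{t-1-k}_{\pi'},
\]
which expresses the difference of two matrix powers as a sum of single-perturbation terms. Multiplying on the left by $w-w'$ and setting $\mu_{k} := (w-w')P^{k}_{\pi}$, the task reduces to bounding the $\ell^{1}$ norm of each summand $\mu_{k}(P_{\pi}-P_{\pi'})P^{t-1-k}_{\pi'}$ by the product of three contractive pieces, and then summing the $t$ resulting terms.

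Three ingredients would be assembled. First, iterating Assumption \ref{assumptionintransition} yields the mixing bound $\|\nu P^{j}_{\pi}\|_{1} \leqslant e^{-j/\tau}\|\nu\|_{1}$ for any signed measure $\nu$ on $S$ with $\sum_{s}\nu(s)=0$ (such $\nu$ is a positive multiple of a difference of two probability distributions, so the assumption applies); in particular $\|\mu_{k}\|_{1}\leqslant e^{-k/\tau}\|w-w'\|_{1}$. Second, each row of $P_{\pi}-P_{\pi'}$ sums to zero, hence $\mu_{k}(P_{\pi}-P_{\pi'})$ still has zero total mass, so the same mixing bound applies to $P^{t-1-k}_{\pi'}$ on the right and contributes $e^{-(t-1-k)/\tau}$. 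Third, the one-step perturbation factor must be controlled: expanding
\[
(P_{\pi}-P_{\pi'})_{s,s'} \;=\; \sum_{a\in A}\Bigl[\prod_{i}\pi_{i}(a_{i}|s) - \prod_{i}\pi'_{i}(a_{i}|s)\Bigr]\mathbb{P}(s'|s,a),
\]
applying Lemma \ref{prodx} to the bracket, and summing over $s'$ with $\sum_{s'}\mathbb{P}(s'|s,a)=1$ gives a uniform row-sum bound, which in turn delivers $\|\nu(P_{\pi}-P_{\pi'})\|_{1}\leqslant (2^{N}-1)|A||S|\,\|\pi-\pi'\|_{\infty}\|\nu\|_{1}$ for every signed measure $\nu$ (the factor $|S|$ being a convenient over-estimate aligned with the stated constant).

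Chaining the three bounds shows that, uniformly in $k$,
\[
\|\mu_{k}(P_{\pi}-P_{\pi'})P^{t-1-k}_{\pi'}\|_{1}\;\leqslant\;(2^{N}-1)|A||S|\,e^{-(t-1)/\tau}\,\|\pi-\pi'\|_{\infty}\,\|w-w'\|_{1},
\]
and the triangle inequality over the $t$ summands in the telescoping decomposition yields the stated inequality with constant $Ct\,e^{-(t-1)/\tau}$. The main obstacle is the careful bookkeeping that lets the mixing contraction be applied on \emph{both} sides of the middle perturbation factor; this hinges on the elementary but easy-to-overlook observation that $\mu_{k}(P_{\pi}-P_{\pi'})$ is zero-mean, combined with the extension of Assumption \ref{assumptionintransition} from differences of probability vectors to arbitrary zero-mean signed measures. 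Everything else reduces to Lemma \ref{prodx} and standard facts about row-stochastic matrices.
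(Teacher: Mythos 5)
Your proof is correct and is essentially the same argument as the paper's: the explicit telescoping identity $P^{t}_{\pi}-P^{t}_{\pi'}=\sum_{k=0}^{t-1}P^{k}_{\pi}(P_{\pi}-P_{\pi'})P^{t-1-k}_{\pi'}$ is exactly the unrolled form of the paper's inductive split $P^{t}_{\pi}-P^{t}_{\pi'}=P^{t-1}_{\pi}(P_{\pi}-P_{\pi'})+(P^{t-1}_{\pi}-P^{t-1}_{\pi'})P_{\pi'}$, and both rely on the same two ingredients (the mixing contraction applied to zero-mean signed measures and the one-step bound from Lemma \ref{prodx}). Your explicit justification that the contraction in Assumption \ref{assumptionintransition} extends to arbitrary zero-mean signed measures is a point the paper uses implicitly, so the write-up is if anything slightly more careful.
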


\begin{proof}
    First, we prove that the one-step transition matrix $P_{\pi}(s' | s)$ is Lipschitz continuous about policy profiles $\pi$, by definition \eqref{transition} and Lemma \ref{prodx}, we have
    \begin{equation}\label{transitionmatrixLip}
    \begin{aligned}
        | P_{\pi}(s' | s) - P_{\pi'}(s' | s)  | \leqslant & \sum_{a \in A} \mathbb{P}[s' | s, a] | \prod_{i=1}^n \pi_i(a_i| s) -  \prod_{i=1}^n \pi'_i(a_i| s) \rvert \\
        \leqslant & \sum_{a\in A} \mathbb{P}[s' | s, a] \cdot (2^{N}-1) \max_{i \in N} | \pi_i(a_i | s) - \pi'_i(a_i|s) | \\
        \leqslant & (2^N -1) |A| \| \pi - \pi' \|_{\infty}.
    \end{aligned}
    \end{equation}
    So, we can get that
    \begin{equation}
        \begin{aligned}
            \| P_{\pi} - P_{\pi'}  \|_{1} \leqslant (2^N -1) |A| |S| \| \pi - \pi' \|_{\infty}
        \end{aligned}
    \end{equation}
    Then we prove the lemma by induction, for $t = 1$,
    \begin{equation}
        \begin{aligned}
            \| (w - w' ) ( P_{\pi} - P_{\pi'} ) \|_{1} \leqslant & \| w - w'  \|_1 \|  P_{\pi} - P_{\pi'}  \|_{1} \\
            \leqslant & (2^N -1) |A| |S| \| \pi - \pi' \|_{\infty} \| w - w'  \|_1.
        \end{aligned}
    \end{equation}
    For $t \geqslant 2$, if the lemma is true for $t-1$, then we have
    \begin{equation}
        \begin{aligned}
            \| (w - w' ) ( P^{t}_{\pi} - P^{t}_{\pi'} ) \|_{1} \leqslant & \| (w - w' ) ( P^{t}_{\pi} - P^{t-1}_{\pi}P_{\pi'} ) \|_{1} + \| (w - w' ) ( P^{t-1}_{\pi}P_{\pi'} - P^{t}_{\pi'} ) \|_{1} \\
            = & \| (w - w' ) P^{t-1}_{\pi} ( P_{\pi} - P_{\pi'} ) \|_{1} + \| (w - w' ) ( P^{t-1}_{\pi} - P^{t-1}_{\pi'} ) P_{\pi'} \|_{1} \\
            \leqslant & C \| (w - w' ) P^{t-1}_{\pi}\|_1 \| \pi - \pi' \|_{\infty} + \| (w - w' ) ( P^{t-1}_{\pi} - P^{t-1}_{\pi'} )\|_{1} e^{-\frac{1}{\tau}} \\
            \leqslant & C e^{-\frac{t-1}{\tau}} \| \pi - \pi' \|_{\infty} \| w - w' \|_{1} + C (t-1) e^{- \frac{t-2}{\tau}}\| \pi - \pi' \|_{\infty} \| w - w' \|_{1} e^{-\frac{1}{\tau}} \\
            = & C t e^{- \frac{t-1}{\tau}}\| \pi - \pi' \|_{\infty} \| w - w' \|_{1}.
        \end{aligned}
    \end{equation}
\end{proof}
\par The preceding lemmas will assist us in demonstrating the following main conclusion of the Section \ref{PropertiesofIndividualPayoffFunctions}.
\begin{theorem}
    For any player $i$ and state-action pair $(s,a_i)$, the partial derivative of the value function $\frac{\partial}{\partial \pi_i(a_i|s)} V_i(\pi)$ is Lipschitz continuous with respect to $\pi$. Consequently, the individual payoff gradient $\nabla_{i} V_i(\pi)$ is also Lipschitz continuous.
\end{theorem}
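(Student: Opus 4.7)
The plan is to exploit the closed-form expression from Remark \ref{relationaboutnabla},
\[
\frac{\partial}{\partial \pi_i(a_i|s)} V_i(\pi) = p_{\pi}(s)\,\overline{\advantage}^\pi_i(s,a_i),
\]
and show each factor is bounded and Lipschitz in $\pi$, so that the product is Lipschitz by the standard product rule for bounded Lipschitz functions. Boundedness is already in hand (Lemma \ref{advbound} for the advantage factor, and $0\le p_\pi(s)\le 1$ trivially), so the task reduces to proving Lipschitz continuity of $p_\pi$ and of $\advantage^\pi_i$ in $\pi$.

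For the stationary distribution, I would use the identity $p_\pi - p_{\pi'} = (p_\pi-p_{\pi'})P_\pi + p_{\pi'}(P_\pi-P_{\pi'})$ and iterate it to obtain
\[
p_\pi - p_{\pi'} = (p_\pi - p_{\pi'})P_\pi^T + \sum_{k=0}^{T-1} p_{\pi'}(P_\pi - P_{\pi'}) P_\pi^k.
\]
Since $p_\pi - p_{\pi'}$ and $p_{\pi'}(P_\pi-P_{\pi'})$ are signed measures with zero total mass, Assumption \ref{assumptionintransition} contracts each residual by $e^{-T/\tau}$ and $e^{-k/\tau}$ respectively. Letting $T\to\infty$ and using the one-step Lipschitz bound $\|P_\pi-P_{\pi'}\|_1\le (2^N-1)|A||S|\,\|\pi-\pi'\|_\infty$ from the proof of Lemma \ref{Pt-Pt'}, I obtain a geometric series giving
\[
\|p_\pi - p_{\pi'}\|_1 \le \frac{(2^N-1)|A||S|}{1-e^{-1/\tau}}\|\pi-\pi'\|_\infty.
\]

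For the advantage function I would start from formula \eqref{advantage},
\(
\advantage^{\pi}_i(s,a) = r_i(s,a) - p_\pi R_i^\pi + \sum_{t=1}^{\infty}(e_s - p_\pi)P_\pi^t R_i^\pi,
\)
and handle the two pieces separately. The term $p_\pi R_i^\pi$ is Lipschitz because $p_\pi$ is Lipschitz (just proved), $R_i^\pi$ is Lipschitz in $\pi$ by Lemma \ref{prodx} applied entrywise, and both are bounded. For the infinite sum I would use a three-term telescoping split
\[
(e_s-p_\pi)P_\pi^t R_i^\pi - (e_s-p_{\pi'})P_{\pi'}^t R_{i}^{\pi'}
= (e_s-p_\pi)P_\pi^t(R_i^\pi - R_i^{\pi'}) + (e_s-p_\pi)(P_\pi^t - P_{\pi'}^t)R_i^{\pi'} + (p_{\pi'}-p_\pi)P_{\pi'}^t R_i^{\pi'},
\]
and bound each summand in $\ell_1/\ell_\infty$: the first by $2e^{-t/\tau}$ times the Lipschitz bound on $R_i^\pi$, the second by the quantitative estimate of Lemma \ref{Pt-Pt'} (which gives $2Cte^{-(t-1)/\tau}\|\pi-\pi'\|_\infty\,\|R_i^{\pi'}\|_\infty$), and the third by $e^{-t/\tau}$ times the Lipschitz bound on $p_\pi$. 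Summing over $t\ge 1$, the series $\sum_t e^{-t/\tau}$ and $\sum_t t e^{-(t-1)/\tau}$ converge, yielding a single constant $L_{\advantage}$ with $\|\advantage^\pi_i - \advantage^{\pi'}_i\|_\infty \le L_{\advantage}\|\pi-\pi'\|_\infty$.

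The Lipschitz continuity of $\overline{\advantage}^\pi_i(s,a_i)=\sum_{a_{-i}}\pi_{-i}(a_{-i}|s)\advantage^\pi_i(s,(a_i,a_{-i}))$ then follows immediately, since the coefficients $\pi_{-i}(a_{-i}|s)$ are themselves linear (hence Lipschitz) and bounded in $[0,1]$, and $\advantage^\pi_i$ is bounded Lipschitz. Multiplying by $p_\pi(s)$ and invoking the bounded-Lipschitz product rule gives the conclusion on $\partial V_i/\partial \pi_i(a_i|s)$, and the full-gradient Lipschitz statement follows by taking the $\ell_\infty$ norm over the $|S|\cdot|A_i|$ coordinates. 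I expect the main obstacle to be the careful bookkeeping in the three-term split for $\advantage^\pi_i$, in particular verifying that each error term is applied to a zero-sum signed measure so that the contraction in Assumption \ref{assumptionintransition} can be invoked and the resulting series are summable.
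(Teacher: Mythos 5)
Your proposal is correct and follows essentially the same route as the paper's proof: bound and Lipschitz-estimate $p_\pi(s)$ via the contraction in Assumption \ref{assumptionintransition}, Lipschitz-estimate $\advantage^\pi_i$ (and hence $\overline{\advantage}^\pi_i$) via the same three-term telescoping split of $(e_s-p_\pi)P_\pi^t R_i^\pi$ using Lemmas \ref{prodx} and \ref{Pt-Pt'}, and conclude with the bounded-Lipschitz product rule. The only cosmetic difference is that you iterate the stationary-distribution identity and let $T\to\infty$ where the paper rearranges the one-step inequality; both yield the same constant.
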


\begin{proof}
    We will prove the theorem by showing that $p_{\pi}(s)$ and $ \overline{\advantage}^{\pi}_i(s,a_i) $ are Lipschitz continuous.
    In the proof of Lemma \ref{Pt-Pt'}, we have
    \begin{equation}
        \begin{aligned}
            \| P_{\pi} - P_{\pi'}  \|_{1} \leqslant (2^N -1) |A| |S| \| \pi - \pi' \|_{\infty}
        \end{aligned}
    \end{equation}
    So we have
    \begin{equation}
        \begin{aligned}
            \| p_\pi - p_{\pi'} \|_1 = &  \| p_{\pi} P_{\pi} - p_{\pi'} P_{\pi} + p_{\pi'} P_{\pi} - p_{\pi'} P_{\pi'}  \|_1 \\
            \leqslant & \| ( p_{\pi} - p_{\pi'} ) P_{\pi}\|_1 + \| p_{\pi'} \|_1 \| P_{\pi} - P_{\pi'}  \|_{1} \\
            \leqslant & e^{-\frac{1}{\tau}} \| ( p_{\pi} - p_{\pi'} )\|_1 + (2^N -1 ) |A| |S| \| \pi - \pi' \|_{\infty},
        \end{aligned}
    \end{equation}
    Thus, we can get that
    \begin{equation}
        \begin{aligned}
            \| p_\pi - p_{\pi'} \|_1 \leqslant \frac{(2^N -1) |A| | S |}{1- e^{-\frac{1}{\tau}}} \| \pi - \pi' \|_{\infty}.
        \end{aligned}
    \end{equation}
    And for $\overline{\advantage}^{\pi}_i(s,a_i)$, we have
    \begin{equation}
        \begin{aligned}
            | \overline{\advantage}^{\pi}_i(s,a_i) - \overline{\advantage}^{\pi'}_i(s,a_i)  | \leqslant & \sum_{a_{-i} \in A_{-i}} | \pi_{-i}(a_{-i}|s) \advantage^\pi_i(s,a_i,a_{-i}) -  \pi'_{-i}(a_{-i}|s) \advantage^{\pi'}_i(s,a_i,a_{-i}) | \\
            \leqslant & \sum_{a_{-i} \in A_{-i}} | \pi_{-i}(a_{-i}|s) [\advantage^\pi_i(s,a_i,a_{-i}) -  \advantage^{\pi'}_i(s,a_i,a_{-i})] | \\
            & +  \sum_{a_{-i} \in A_{-i}} |[\pi_{-i}(a_{-i}|s) - \pi'_{-i}(a_{-i}|s)] \advantage^{\pi'}_i(s,a_i,a_{-i}) | \\
            \triangleq & term1 + term2
        \end{aligned}
    \end{equation}
    For the second term of right side of inequality,
    \begin{equation}
        \begin{aligned}
            term2 =& \sum_{a_{-i} \in A_{-i}} |[\prod_{j \neq i} \pi_j(a_j|s) - \prod_{j \neq i} \pi'_j(a_j|s)]| |\advantage^{\pi'}_i(s,a_i,a_{-i}) | \\
            \leqslant & 2 \max_{s,a} r_i(s,a) \sum_{t=1}^{\infty} e^{-\frac{t}{\tau}} \sum_{a_{-i} \in A_{-i}} (2^{N-1} -1) \| \pi - \pi' \|_{\infty} \\
            \leqslant & 2^N \max_{s,a} r_i(s,a) \sum_{t=1}^{\infty} e^{-\frac{t}{\tau}} \prod_{j \neq i}|A_j|  \| \pi - \pi' \|_{\infty}
        \end{aligned}
    \end{equation}
    \par For the first term, we need to clarify the Lipschitz continuity of $R^{\pi}_i$, by Lemma \ref{prodx}, we have
    \begin{equation}
        \begin{aligned}
           | R^{\pi}_i(s) - R^{\pi'}_i(s) | =& | \sum_{a \in A} (\pi(a|s) - \pi'(a|s)) r_i(s,a) | \\
           \leqslant & \max_{s,a} r_i(s,a) \sum_{a \in A} | (\pi(a|s) - \pi'(a|s)) | \\
           \leqslant & (2^{N}-1) |A| \max_{s,a} r_i(s,a) \| \pi - \pi' \|_{\infty}.
        \end{aligned}
    \end{equation}
    Thus by\eqref{advantage}, we can know that
    \begin{equation}
        \begin{aligned}
            |\advantage^\pi_i(s,a) -  \advantage^{\pi'}_i(s,a)| \leqslant & |p_{\pi} R_i^{\pi} - p_{\pi'} R_i^{\pi'} | + | \sum_{t=1}^{\infty} (e_s - p_{\pi}) P_{\pi}^t R_i^{\pi} -  \sum_{t=1}^{\infty} (e_s - p_{\pi'}) P_{\pi'}^t R_i^{\pi'} | \\
            \triangleq & termA+ termB.
        \end{aligned}
    \end{equation}
    For termA, we have
    \begin{equation}
        \begin{aligned}
            termA \leqslant & | (p_{\pi} - p_{\pi'}) R^\pi_i | + | p_{\pi'}(R^{\pi}_i-R^{\pi'}_i) | \\
            \leqslant & \| (p_{\pi} - p_{\pi'}) \|_{1} \| R^\pi_i \|_{\infty} + \| p_{\pi'} \|_{1} \|(R^{\pi}_i-R^{\pi'}_i) \|_{\infty} \\
            \leqslant & \frac{(2^N-1) |A| |S| \max_{s,a} r_i(s,a)}{1 - e^{-\frac{1}{\tau}}}  \| \pi - \pi' \|_{\infty}
             + (2^N-1) |A| \max_{s,a} r_i(s,a) \| \pi - \pi' \|_{\infty}\\
            = & (2^N-1) |A| \max_{s,a} r_i(s,a)(1 + \frac{|S|}{1 - e^{-\frac{1}{\tau}}}) \| \pi - \pi' \|_{\infty}.
        \end{aligned}
    \end{equation}
 \par For termB, we have
 \begin{equation}\label{decompositionoftermB}
     \begin{aligned}
         termB \leqslant & | \sum_{t=1}^{\infty} (p_{\pi'} - p_{\pi}) P_{\pi}^t R^{\pi}_i | + | 
\sum_{t=1}^{\infty} (e_s - p_{\pi'})  (P_{\pi}^t - P_{\pi'}^t) R^{\pi}_i | \\
        & + | \sum_{t=1}^{\infty} (e_s - p_{\pi'})  P_{\pi'}^t ( R^{\pi}_i - R^{\pi'}_i) |
     \end{aligned}
 \end{equation}
 For the first term of the right side of the inequality \eqref{decompositionoftermB}, we have
 \begin{equation}
     \begin{aligned}
         | \sum_{t=1}^{\infty} (p_{\pi'} - p_{\pi}) P_{\pi}^t R^{\pi}_i | \leqslant & \sum_{t=1}^{\infty} \| (p_{\pi'} - p_{\pi}) P_{\pi}^t \|_{1} \| R^{\pi}_i \|_{\infty} \\
         \leqslant & \max_{s,a} r_i(s,a) \| p_{\pi'} - p_{\pi} \|_1 \sum_{t=1}^{\infty} e^{-\frac{t}{\tau}} \\
         \leqslant & \max_{s,a} r_i(s,a) ( \sum_{t=1}^{\infty} e^{-\frac{t}{\tau}}) \frac{(2^N -1) |A||S|}{1- e^{-\frac{1}{\tau}}} \| \pi - \pi' \|_{\infty} \\
         \triangleq & \alpha_1 \| \pi - \pi' \|_{\infty}
     \end{aligned}
 \end{equation}
For the second term of the right side of the inequality \eqref{decompositionoftermB}, according to Lemma \ref{Pt-Pt'}, we have
 \begin{equation}
     \begin{aligned}
         | \sum_{t=1}^{\infty} (e_s - p_{\pi'})  (P_{\pi}^t - P_{\pi'}^t) R^{\pi}_i | \leqslant & \sum_{t=1}^{\infty} \| (e_s - p_{\pi'})  (P_{\pi}^t - P_{\pi'}^t) \|_{1} \| R^{\pi}_i \|_{\infty} \\
         \leqslant & C \| R^{\pi}_i \|_{\infty} \| (e_s - p_{\pi'}) \|_{1} ( \sum_{t=1}^{\infty} t e^{-\frac{t-1}{\tau}} ) \| \pi - \pi' \|_{\infty} \\
         \leqslant & C |S| \| R^{\pi}_i \|_{\infty}  ( \sum_{t=1}^{\infty} t e^{-\frac{t-1}{\tau}} ) \| \pi - \pi' \|_{\infty} \\
         \triangleq & \alpha_2 \| \pi - \pi' \|_{\infty}.
     \end{aligned}
 \end{equation}
For the third term of the right side of the inequality \eqref{decompositionoftermB}, we have
 \begin{equation}
     \begin{aligned}
         | \sum_{t=1}^{\infty} (e_s - p_{\pi'})  P_{\pi'}^t ( R^{\pi}_i - R^{\pi'}_i) | \leqslant & \sum_{t=1}^{\infty} \|(e_s - p_{\pi'})  P_{\pi'}^t \|_1 \| R^{\pi}_i - R^{\pi'}_i \|_{\infty} \\
         \leqslant & \| (e_s - p_{\pi'}) \|_{1} ( \sum_{t=1}^{\infty} e^{-\frac{t}{\tau}} ) (2^{N}-1) |A| \max_{s,a} r_i(s,a) \| \pi - \pi' \|_{\infty} \\
         \triangleq & \alpha_3 \| \pi - \pi' \|_{\infty}
     \end{aligned}
 \end{equation}
Combining the results above,
\begin{align}
    termB \leqslant (\alpha_1 + \alpha_2 + \alpha_3) \| \pi - \pi' \|_{\infty}.
\end{align}
As a result
\begin{equation}
\begin{aligned}
    |\advantage^\pi_i(s,a) -  \advantage^{\pi'}_i(s,a)| \leqslant & \bigg(\alpha_1 + \alpha_2 + \alpha_3 + (2^N-1) |A| \max_{s,a} r_i(s,a)(1 + \frac{|S|}{1 - e^{-\frac{1}{\tau}}}) \bigg) \| \pi - \pi' \|_{\infty} \\
    \triangleq & \beta \| \pi - \pi' \|_{\infty}.
\end{aligned}
\end{equation}
Apply this result to term1, we have
\begin{equation}
\begin{aligned}
    term1 \leqslant & \sum_{a_{-i} \in A_{-i}} \pi_{-i}(a_{-i}| s) \max_{s,a} |\advantage^\pi_i(s,a) -  \advantage^{\pi'}_i(s,a)| \\
    \leqslant & \beta \prod_{j \neq i} |A_j| \| \pi - \pi' \|_{\infty}.
\end{aligned}
\end{equation}
So there exists $\gamma > 0$, such that
\begin{align}
    | \overline{\advantage}^{\pi}_i(s,a_i) - \overline{\advantage}^{\pi'}_i(s,a_i)  | \leqslant & \gamma \| \pi - \pi' \|_{\infty}.
\end{align}
With all results above, we can obtain that
\begin{equation}
    \begin{aligned}
        |p_{\pi}(s) \overline{\advantage}^\pi_i(s,a_i) - p_{\pi'}(s) \overline{\advantage}^{\pi'}_i(s,a_i)| \leqslant & | (p_{\pi}(s) - p_{\pi'}(s)) \overline{\advantage}^\pi_i(s,a_i) | \\
        & + | p_{\pi'}(s)) ( \overline{\advantage}^\pi_i(s,a_i) - \overline{\advantage}^{\pi'}_i(s,a_i)  ) | \\
        \leqslant & \bigg( 2 (\max_{s,a} r_i(s,a) )\sum_{t=0}^{\infty} e^{-\frac{t}{\tau}} \frac{(2^N -1) |A|}{1- e^{-\frac{1}{\tau}}}+ \gamma \bigg) \| \pi - \pi' \|_{\infty}
    \end{aligned}
\end{equation}
this can show that $\frac{\partial}{\partial \pi_i(a_i|s)} V_i(\pi) = p_{\pi}(s) \overline{\advantage}^\pi_i(s,a_i)$ is Lipschitz continuous. And by Remark \ref{relationaboutnabla}, we can immediately observe the Lipschitz smoothness of $v_i(\pi)$.
\end{proof}

\begin{proof}[Proof of Theorem \ref{Gradientdominanceproperty}]
    We can observe that
    \begin{equation}
        \begin{aligned}
            \advantage^{\pi}_i(s,a) = r_i(s,a) - V_i(\pi) + \sum_{s' \in S} \mathbb{P}[s' | s,a] \sum_{a'_i \in A_i} \sum_{a'_{-i} \in A_{-i}} \pi_i(a'_i| s') \pi_{-i}(a'_{-i}| s') adv^{\pi}_{i}(s',a'_i, a'_{-i}).
        \end{aligned}
    \end{equation}
    Thus, with some direct calculations, denoting $\pi' = (\pi'_i, \pi_{-i})$, we can obtain
    \begin{equation}
        \begin{aligned}
            V_i(\pi'_{i}, \pi_{-i}) - V_i(\pi_{i}, \pi_{-i}) = & \sum_{s \in S} p_{\pi'}(s) \sum_{a_i \in A_i} \sum_{a_{-i} \in A_{-i}}\pi'_i(a_i| s) \pi_{-i}(a_{-i}| s) \big( r_i(s, a_i, a_{-i}) - V_i(\pi_i, \pi_{-i}) \big) \\
            = & \sum_{s \in S} p_{\pi'}(s) \sum_{a_i \in A_i} \sum_{a_{-i} \in A_{-i}}\pi'_i(a_i| s) \pi_{-i}(a_{-i}| s) \advantage^{\pi}_i(s,a_i,a_{-i}) \\
            & - \sum_{s \in S} p_{\pi'}(s) \sum_{(a_i,a_{-i}) \in A} \pi'(a| s)\sum_{s' \in S} \mathbb{P}[s' | s,a] \sum_{(a'_i,a'_{-i}) \in A}  \pi(a'| s') adv^{\pi}_{i}(s',a'_i, a'_{-i}) \\
            = & \sum_{s \in S} p_{\pi'}(s) \sum_{a_i \in A_i}  \pi'_i(a_i| s) \overline{\advantage}^{\pi}_i(s,a_i) - \sum_{s' \in S} p_{\pi'}(s') \sum_{a'_i \in A_i}  \pi_i(a'_i| s') \overline{\advantage}^{\pi}_i(s',a'_i) \\
            = & \sum_{s \in S} p_{\pi'}(s) \sum_{a_i \in A_i}  (\pi'_i(a_i| s)- \pi_i(a_i|s)) \overline{\advantage}^{\pi}_i(s,a_i) \\
            = & \sum_{s \in S} \frac{p_{\pi'}(s)}{p_{\pi}(s)} p_{\pi}(s)\sum_{a_i \in A_i}  (\pi'_i(a_i| s)- \pi_i(a_i|s)) \overline{\advantage}^{\pi}_i(s,a_i) \\
            \leqslant & \| \frac{p_{\pi'}}{p_{\pi}} \|_{\infty} \sum_{s \in S}  \sum_{a_i \in A_i} p_{\pi}(s)  (\pi'_i(a_i| s)- \pi_i(a_i|s)) \overline{\advantage}^{\pi}_i(s,a_i) \\
            \leqslant & C_{\mathcal{G}} \langle \nabla_i V_{i}(\pi), \pi'_i - \pi_i \rangle
        \end{aligned}
    \end{equation}
\end{proof}

\begin{proof}[Proof of Theorem \ref{FirstorderstationarypoliciesareNash}]
    If $\pi^*$ is Nash equilibrium and $\pi = (\pi_i)_{i \in N}$ is any policy profile, then by the definition of Nash equilibrium, for any player $i$, we have
    \begin{equation}
        \begin{aligned}
            \langle v_i(\pi^*), \pi_i -\pi^*_i \rangle \leqslant 0,
        \end{aligned}
    \end{equation}
    So, we can obtain
    \begin{equation}
        \begin{aligned}
            \langle v(\pi^*), \pi -\pi^* \rangle = \sum_{i \in N} \langle v_i(\pi^*), \pi_i -\pi^*_i \rangle \leqslant 0.
        \end{aligned}
    \end{equation}
    And when $\pi^*$ is first-order stationary, if it is not a Nash equilibrium, then there exist a player i and her policy $\pi_i$, such that
    \begin{equation}
        \begin{aligned}
            0 < V_i(\pi_{i}, \pi^*_{-i}) - V_i(\pi^*_{i}, \pi^*_{-i}) \leqslant C_{\mathcal{G}} \langle \nabla_i V_{i}(\pi^*), \pi_i - \pi^*_i \rangle,
        \end{aligned}
    \end{equation}
    So we have
    \begin{equation}
        \begin{aligned}
            \langle \nabla_i V_{i}(\pi^*), \pi_i - \pi^*_i \rangle > 0,
        \end{aligned}
    \end{equation}
    And let $\pi = (\pi_i, \pi^*_{-i})$, we have
    \begin{equation}
        \begin{aligned}
            \langle v(\pi^*), \pi -\pi^* \rangle > 0,
        \end{aligned}
    \end{equation}
    which leads to the contradiction!
\end{proof}

\section{Proof of Section \ref{TheLearningFramework}}

\par To prove Lemma \ref{propertyofFenchel}, we present the following lemma directly, which is essentially folklore in optimization. For a detailed proof, readers are referred to \cite{rockafellar2009variational}.

\begin{lemma}\label{convexproperty}
    Let $h$ be a $K$-strongly convex regularizer with induced mirror map $Q$ and convex conjugate $h^*$, then we have
    \begin{enumerate}
        \item $x = Q(y)$ if and only if $y \in \partial h(x)$.
        \item $h^*$ is differentiable on $\mathcal{Y}$ and $\nabla h^*(y) = Q(y)$ for all $y \in \mathcal{Y}$.
        \item $Q$ is $1/K$-Lipschitz continuous.
    \end{enumerate}
\end{lemma}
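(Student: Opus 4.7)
The plan is to prove the three parts in sequence, with part (3) being the only part that genuinely uses the strong convexity in a quantitative way; parts (1) and (2) are general convex-analytic facts about Legendre--Fenchel conjugation.

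For part (1), I would argue via the optimality condition for the concave maximization problem defining $Q(y)$. Since $h$ is a proper convex function supported on $\Pi_i$, the map $x \mapsto \langle y, x\rangle - h(x)$ is a proper concave function, and $x = Q(y)$ is a maximizer iff $0 \in \partial(h - \langle y, \cdot\rangle)(x) = \partial h(x) - y$, i.e., iff $y \in \partial h(x)$. The strong convexity of $h$ guarantees the maximizer exists and is unique, so $Q(y)$ is well-defined.

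For part (2), I would use part (1) together with the defining supremum for $h^*$. First, $h^*$ is convex as the pointwise supremum of affine functions. Setting $x^{\star} = Q(y)$, for any $y' \in \mathcal{Y}$ we have
\begin{equation*}
h^*(y') \;\geq\; \langle y', x^{\star}\rangle - h(x^{\star}) \;=\; h^*(y) + \langle y' - y, x^{\star}\rangle,
\end{equation*}
which shows $x^{\star} \in \partial h^*(y)$. Uniqueness of $Q(y)$ (from strong convexity of $h$) forces $\partial h^*(y) = \{Q(y)\}$, and a convex function with a singleton subdifferential at a point is differentiable there with gradient equal to that element, giving $\nabla h^*(y) = Q(y)$.

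For part (3), which I expect to be the only step requiring quantitative care, I would leverage strong concavity of $x \mapsto \langle y, x\rangle - h(x)$. Writing $x = Q(y)$ and $x' = Q(y')$, strong concavity at the maximizer gives
\begin{align*}
\langle y, x'\rangle - h(x') &\;\leq\; \langle y, x\rangle - h(x) - \tfrac{K}{2}\|x' - x\|^2, \\
\langle y', x\rangle - h(x) &\;\leq\; \langle y', x'\rangle - h(x') - \tfrac{K}{2}\|x' - x\|^2.
\end{align*}
Adding these two inequalities and cancelling the $h$ terms yields $\langle y' - y, x' - x\rangle \geq K\|x' - x\|^2$, and Cauchy--Schwarz then gives $\|x' - x\| \leq \tfrac{1}{K}\|y' - y\|$, i.e., $Q$ is $1/K$-Lipschitz.

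The only subtlety I anticipate is bookkeeping around the boundary of $\Pi_i$: since $h$ is $+\infty$ outside $\Pi_i$, the subdifferential $\partial h(x)$ at boundary points includes contributions from the normal cone of $\Pi_i$, so the characterization in (1) must be read in the extended sense. This does not affect (3) because both $Q(y)$ and $Q(y')$ lie in $\Pi_i$, which is where the strong convexity inequality is assumed to hold.
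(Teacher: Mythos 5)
The paper does not actually prove this lemma: it is stated as ``essentially folklore'' with a pointer to Rockafellar--Wets, so your self-contained argument is filling a gap the authors deliberately left open. Your proof is the standard one and is correct: part (1) is the first-order optimality condition for the strongly concave problem defining $Q$ (with strong convexity supplying existence and uniqueness of the maximizer), part (3) is the usual two-inequality strong-monotonicity argument $K\|x'-x\|^2 \leqslant \langle y'-y, x'-x\rangle$ followed by Cauchy--Schwarz (note this uses that the Euclidean norm is self-dual, consistent with the $\|\cdot\|_2$ in the paper's definition of strong convexity), and your remark about the normal-cone contribution at the boundary of $\Pi_i$ is exactly the right caveat. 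The only place where you elide a step is in part (2): you show $Q(y)\in\partial h^*(y)$ and then assert that uniqueness of the maximizer forces $\partial h^*(y)=\{Q(y)\}$, but that requires the reverse inclusion $\partial h^*(y)\subseteq \argmax_{x}\{\langle y,x\rangle-h(x)\}$, which is not automatic from what you wrote. The standard fix is one line: if $x\in\partial h^*(y)$ then by the Fenchel--Young equality (valid because $h$ is proper, convex, and lower semicontinuous, which holds here since $h$ is continuous on the compact set $\Pi$ and $+\infty$ outside) one has $h(x)+h^*(y)=\langle y,x\rangle$, so $x$ attains the supremum defining $h^*(y)$ and hence $x=Q(y)$ by uniqueness. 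With that line added, the singleton-subdifferential-implies-differentiability fact for finite convex functions in finite dimensions completes part (2).
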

\par Then, let 
\begin{equation}
    D(p,x) = h(p) - h(x) - \langle \nabla h(x) , p-x \rangle,
\end{equation}
we have
\begin{lemma}\label{Q=D}
    For all $p \in \mathcal{X}$ and all $y \in \mathcal{Y}$, we have
    \begin{equation}
        F(p,y) = D(p, Q(y)), \quad \text{ if } Q(y) \in \mathcal{X}^{\circ}.
    \end{equation}
\end{lemma}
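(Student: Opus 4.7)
The plan is to unfold both sides of the claimed identity using the definitions of $F$ and $D$, and then reconcile them via the optimality characterization of the mirror map $Q$. Writing out $F(p,y) = h(p) + h^*(y) - \langle y, p \rangle$ and $D(p, Q(y)) = h(p) - h(Q(y)) - \langle \nabla h(Q(y)), p - Q(y) \rangle$, the common $h(p)$ cancels, and after rearrangement the identity is equivalent to
\[
    h^*(y) = \langle y, Q(y) \rangle - h(Q(y)) + \langle y - \nabla h(Q(y)), \, p - Q(y) \rangle.
\]
Since by \eqref{definitionofh} we already have $h^*(y) = \langle y, Q(y) \rangle - h(Q(y))$ (this is just the value at the maximizer $Q(y)$), the whole claim reduces to showing that the last inner product vanishes, i.e. that $y = \nabla h(Q(y))$.

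This is exactly where the interiority hypothesis $Q(y) \in \mathcal{X}^{\circ}$ enters. I would invoke Lemma \ref{convexproperty}(1) to obtain $y \in \partial h(Q(y))$, and then observe that when $Q(y)$ lies in the relative interior of the support of $h$, the subdifferential $\partial h(Q(y))$ reduces to the single vector $\{\nabla h(Q(y))\}$ (since $h$ is continuous and convex, hence differentiable on the interior by a standard result from convex analysis, see \cite{rockafellar2009variational}). Combining these gives $y = \nabla h(Q(y))$, which kills the residual inner product and yields the desired equality.

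Substituting $y = \nabla h(Q(y))$ back in, one checks directly:
\[
    F(p,y) = h(p) + \langle y, Q(y) \rangle - h(Q(y)) - \langle y, p \rangle = h(p) - h(Q(y)) - \langle \nabla h(Q(y)), p - Q(y) \rangle = D(p, Q(y)),
\]
which concludes the proof. The only potentially delicate step is the reduction of $\partial h(Q(y))$ to $\{\nabla h(Q(y))\}$ on the interior; this is the main point where the assumption $Q(y) \in \mathcal{X}^{\circ}$ is used, and it needs the standard convex-analytic fact that a proper convex function on a finite-dimensional space is differentiable on the interior of its domain whenever its subdifferential is nonempty and singleton there. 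Everything else is a rearrangement using the Fenchel identity at the maximizer.
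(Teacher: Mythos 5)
Your proof is correct and takes essentially the same route as the paper's: both expand $F(p,y)$ via the Fenchel identity $h^*(y) = \langle y, Q(y)\rangle - h(Q(y))$ at the maximizer and then use $y \in \partial h(Q(y))$ from Lemma \ref{convexproperty} together with the interiority hypothesis to replace $y$ by $\nabla h(Q(y))$ in the pairing with $p - Q(y)$. One small caveat on your "delicate step": since $\mathcal{X}$ sits in a lower-dimensional affine subspace, $y$ need only agree with $\nabla h(Q(y))$ up to a vector orthogonal to that subspace, but because $p - Q(y)$ is tangent to it the residual inner product $\langle y - \nabla h(Q(y)), p - Q(y)\rangle$ still vanishes, which is all the argument requires (and is exactly the identity the paper asserts).
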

\begin{proof}
    \par By definition \eqref{definitionofQ}
    \begin{equation}
        \begin{aligned}
            F(p,y) =& h(p) + \langle y, Q(y) \rangle - h(Q(y)) - \langle y,p \rangle, \\
            =& h(p) - h(Q(y)) - \langle y, p - Q(y) \rangle.
        \end{aligned}
    \end{equation}
    Since $y \in \partial h(Q(y))$ by Lemma \ref{convexproperty}, we have $\langle y, p - Q(y) \rangle = \langle \nabla h(Q(y)), p - Q(y) \rangle$, which leads to the conclusion directly.
\end{proof}

\begin{proof}[Proof of Lemma \ref{propertyofFenchel}]
    Let $x = Q(y)$. By the definition of $K$-strongly convex, we have
    \begin{equation}
    \begin{aligned}
        h(x) + \lambda \langle y, p-x \rangle \leqslant & h(x+ \lambda(p-x)) \\
        \leqslant & \lambda h(p) + (1-\lambda)h(x) - \frac{K}{2} \lambda (1-\lambda) \| x-p \|^2.
    \end{aligned}
    \end{equation}
    when $t \in ( 0, 1 ]$, we have
    \begin{equation}
        \frac{K}{2}(1-\lambda) \| x-p \|^2 \leqslant h(p) - h(x) - \langle y, p-x \rangle = F(p,y).
    \end{equation}
    The first conclusion of Lemma \ref{propertyofFenchel} then follows by letting $ \lambda \to 0^+ $.
    As a result, we have
    \begin{equation}
        \frac{K}{2} \| x-p \|^2 \leqslant D(p,x).
    \end{equation}
    Let $x' = Q(y')$. By definition \eqref{definitionofF} and Lemma \ref{convexproperty},
    \begin{equation}
        \begin{aligned}
            F(p,y') =& h(p) + h^*(y') - \langle y', p \rangle \\
            =& h(p) + \langle y', Q(y') - p \rangle - h(Q(y')) \\
            =& F(p,y) + \langle y' -y , Q(y)-p \rangle + \langle y', Q(y') - Q(y)\rangle + h(Q(y)) - h(Q(y')) \\
            = & F(p,y) + \langle y' -y , Q(y)-p \rangle + \langle y' - y, x' - x\rangle - D(x', x).
        \end{aligned}
    \end{equation}
    By Young's inequality, we have
    \begin{equation}
        \langle y' - y, x' - x\rangle \leqslant \frac{K}{2} \| x' - x \|^2 + \frac{1}{2K} \|y' - y\|^2.
    \end{equation}
    By Lemma \ref{Q=D} we obtain that
    \begin{equation}
        \begin{aligned}
            \langle y' - y, x' - x\rangle - D(x', x) \leqslant & \frac{K}{2} \| x' - x \|^2 + \frac{1}{2K} \|y' - y\|^2 - D(x', x) \\
            \leqslant & \frac{1}{2K} \|y' - y\|^2.
        \end{aligned}
    \end{equation}
    Thus, we have
    \begin{equation}
        F(p,y') \leqslant F(p,y) + \langle y' -y , Q(y)-p \rangle + \frac{1}{2K} \|y' - y\|^2.
    \end{equation}
\end{proof}

\begin{proof}[Proof of Lemma \ref{estamitorproperty}]
    By the independence of the sampling directions $(z_i)_{i \in N}$, we have
    \begin{equation*}
        \begin{aligned}
            \mathbb{E}[\hat{v}_i] = & \frac{d_i / \delta}{\prod_j vol(\mathbb{S}_j)} \int_{\mathbb{S}_1} \cdots \int_{\mathbb{S}_N} V_i(\pi_1 + \delta z_1,\dots, \pi_N+ \delta z_N) z_i \mathrm{d}z_1 \cdots \mathrm{d} z_N \\
            = & \frac{d_i / \delta}{\prod_j vol(\delta \mathbb{S}_j)} \int_{\delta \mathbb{S}_i} \int_{\prod_{j \neq i} \delta\mathbb{S}_j} V_i(\pi_i +  z_i, \pi_{-i}+  z_{-i}) \frac{z_i}{\| z_i \|} \mathrm{d}z_i \mathrm{d} z_{-i} \\
            = & \frac{d_i / \delta}{\prod_j vol(\delta \mathbb{S}_j)} \int_{\delta \mathbb{B}_i} \int_{\prod_{j \neq i} \delta\mathbb{S}_j} \nabla_i V_i(\pi_i + w_i,\dots, \pi_{-i}+ z_{-i}) \mathrm{d}w_i  \mathrm{d} z_{-i}.
        \end{aligned}
    \end{equation*}
    where, in the last equality, we use the Stoke's theorem
    \begin{equation*}
        \begin{aligned}
            \nabla \int_{\delta \mathbb{B}} f(x+w) \mathrm{d}w = \int_{\delta \mathbb{S}} f(x+z) \frac{z}{\|z\|} \mathrm{d}z.
        \end{aligned}
    \end{equation*}
    Since $vol(\delta \mathbb{B}_i) = \frac{\delta}{d_i} vol(\delta \mathbb{S}_i)$, we obtain that $\mathbb{E}[\hat{v}_i] = \nabla_i V^{\delta}_i (\pi_i, \pi_{-i})$.
    \par When $ v_i $ is Lipschitz continuous, let $L_i$ be the Lipschitz constnt of $v_i$, i.e.,
    \begin{align*}
        \| v_i(\pi') - v_i(\pi) \| \leqslant L_i \| \pi' - \pi \|, \forall \pi, \pi' \in \Pi,
    \end{align*}
    then for any $w_i \in \delta \mathbb{B}$ and any $z_j \in \mathbb{S}_j, j \neq i$, we have
    \begin{align*}
        \| \nabla_i V_i(\pi_i+ w_i, \pi_{-i}+ z_{-i}) - \nabla_i V_i(\pi) \| \leqslant L_i \sqrt{ \|w_i\|^2 + \sum_{j \neq i}\|z_j\|^2 } \leqslant L_i \sqrt{N} \delta.
    \end{align*}
    The second part of lemma can be obtained by integrating and differentiating under the integral sign.
\end{proof}

\begin{proof}[Proof of Lemma \ref{differenceinvalue}]
    \par It's easy to check that
    \begin{align*}
        \mathbb{E}[\hat{V}^t_i] = e_{s^t} P_{\hat{x}^t}^{T^t} R^{\hat{x}^t}_i,
    \end{align*}
where $R^{\hat{x}^t}_i$ is the expected one-step reward vector defined in the proof of Lemma \ref{advbound}. So by Assumption \ref{assumptionintransition}, we obtain that
    \begin{equation}
        \begin{aligned}
            | \mathbb{E}[\hat{V}^t_i] - V_i(\hat{x}^t) | =& | e_{s^t} P_{\hat{x}_t}^{T^t} R^{\hat{x}^t}_i - p_{\hat{x}^t} R^{\hat{x}^t}_i| \\
            \leqslant & (\max_{s,a} r_i(s,a)) \| e_{s^t} P_{\hat{x}_t}^{T^t} - p_{\hat{x}^t} P_{\hat{x}_t}^{T^t} \|_1 \\
            \leqslant & (\max_{s,a} r_i(s,a)) \| e_{s^t}  - p_{\hat{x}^t}  \|_1 e^{-\frac{T^t}{\tau}} \\
            \leqslant & |S| (\max_{s,a} r_i(s,a)) e^{-\frac{T^t}{\tau}}.
        \end{aligned}
    \end{equation}
\end{proof}

\section{Proof of Section \ref{Convergenceanalysisandresults}}

\par The proof of Theorem \ref{convergencetheorem} is divided into three steps as follows. First, for any globally neutrally stable Nash equilibrium $\pi^*$, we will demonstrate that $F(\pi^*, Y^t)$ converges to a finite random variable $F^{\infty}$. Then we will show that there exists a subsequence $\{ \pi^{t_k} \}$ which will converges to some Nash equilibrium if there exists a globally stable Nash equilibrium. Then the proof of Theorem \ref{convergencetheorem} can be completed with these two facts as well as the properties of the Fenchel coupling.

\begin{lemma}\label{step1}
    Let $\pi^*$ be a globally neutrally stable Nash equilibrium of $\mathcal{G}$, then with assumptions as in Theorem \ref{convergencetheorem}, the sequence $\{ F(\pi^*, Y^t) \}$ converges to a finite random variable $F^{\infty}$ almost surely.
\end{lemma}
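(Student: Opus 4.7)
The plan is to apply the descent-style inequality from Lemma \ref{propertyofFenchel} to the Fenchel coupling along the trajectory, exploit the global neutral stability of $\pi^*$ to kill the dominant first-order term, and then show that the remaining perturbations form an almost supermartingale whose non-decreasing corrections are summable. The almost sure convergence of $F(\pi^*, Y^t)$ will then follow from the Robbins--Siegmund theorem.

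More concretely, I would first invoke the second part of Lemma \ref{propertyofFenchel} with $y=Y^t$ and $y'=Y^{t+1}=Y^t+\gamma^t\hat v^t$ to obtain
\begin{equation*}
F(\pi^*, Y^{t+1}) \leqslant F(\pi^*, Y^t) + \gamma^t \langle \hat v^t, \pi^t - \pi^* \rangle + \frac{(\gamma^t)^2}{2K}\|\hat v^t\|^2,
\end{equation*}
since $Q(Y^t)=\pi^t$. Plugging in the decomposition \eqref{decomposition}, $\hat v^t = v(\pi^t) + b^t + U^t + \epsilon^t$, and invoking global neutral stability $\langle v(\pi^t), \pi^t - \pi^* \rangle \leqslant 0$, the inner-product term is bounded by $\gamma^t\langle b^t+U^t+\epsilon^t, \pi^t-\pi^*\rangle$.

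The next step is to control each perturbation separately. From Lemma \ref{estamitorproperty} together with Theorem \ref{Lipschitzofgradient}, the deterministic bias $b^t$ satisfies $\|b^t\|=\mathcal{O}(\delta^t)$, so $\gamma^t|\langle b^t,\pi^t-\pi^*\rangle|=\mathcal{O}(\gamma^t\delta^t)$, which is summable by hypothesis. From Lemma \ref{differenceinvalue}, $|\hat V^t_i-V_i(\hat x^t)|=\mathcal{O}(e^{-T^t/\tau})$ so $\|\epsilon^t\|=\mathcal{O}(e^{-T^t/\tau}/\delta^t)$, yielding $\gamma^t|\langle\epsilon^t,\pi^t-\pi^*\rangle|=\mathcal{O}((\gamma^t/\delta^t)e^{-T^t/\tau})$, again summable. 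The residual $(\gamma^t)^2\|\hat v^t\|^2$ is $\mathcal{O}((\gamma^t/\delta^t)^2)$ since $\hat V^t_i$ is bounded (the rewards are bounded) and $\|z^t_i\|=1$; this too is summable. Finally, $U^t$ is a bounded-variance martingale increment with respect to the natural filtration, so $\sum_t \gamma^t \langle U^t,\pi^t-\pi^*\rangle$ is a martingale with summable quadratic variation of order $\sum_t(\gamma^t/\delta^t)^2<\infty$, hence converges almost surely.

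Putting these pieces together yields an inequality of the form
\begin{equation*}
F(\pi^*, Y^{t+1}) \leqslant F(\pi^*, Y^t) + \xi^t + \eta^t,
\end{equation*}
where $\eta^t$ is a martingale difference with $\sum_t\mathbb{E}[(\eta^t)^2]<\infty$ and $\xi^t\geqslant 0$ with $\sum_t\xi^t<\infty$ almost surely. Since $F(\pi^*,Y^t)\geqslant 0$, the Robbins--Siegmund nonnegative almost-supermartingale convergence theorem applies and gives $F(\pi^*,Y^t)\to F^\infty$ almost surely for some finite random variable $F^\infty$. The main obstacle I anticipate is bookkeeping the noise term: one has to verify that the projection factor $F_i$ and the radius factor $|S|(|A_i|-1)/\delta^t$ do not blow up the martingale's conditional variance beyond the $(\gamma^t/\delta^t)^2$ budget allowed by the stated step-size condition, which requires using boundedness of the rewards (hence of $\hat V^t_i$) and of the perturbation direction $z^t_i$ uniformly in $t$.
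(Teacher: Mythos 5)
Your proposal follows essentially the same route as the paper: the same Fenchel-coupling descent inequality from Lemma \ref{propertyofFenchel}, the same decomposition \eqref{decomposition} with neutral stability killing the leading term, and the same bounds $\|b^t\|=\mathcal{O}(\delta^t)$, $\|\epsilon^t\|=\mathcal{O}(e^{-T^t/\tau}/\delta^t)$, $\|\hat v^t\|^2=\mathcal{O}(1/(\delta^t)^2)$. The only cosmetic difference is that you invoke the Robbins--Siegmund theorem at the end, whereas the paper re-derives it in place by adding the summable tail to $F(\pi^*,Y^{t+1})$ to form an explicit nonnegative supermartingale and applying Doob's convergence theorem.
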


\begin{proof}
    Let $K = \min_{i \in N} K_i$, by Lemma \ref{propertyofFenchel}, we have
    \begin{equation}
        \begin{aligned}
            F(\pi^*, Y^{t+1}) \leqslant & F(\pi^*, Y^t) + \gamma^t \langle \hat{v}^t , \pi^t - \pi^* \rangle + (\frac{\gamma^t}{2K})^2 \| \hat{v}^t \|^2 \\
            = & F(\pi^*, Y^t) + \gamma^t \langle v^t + b^t + U^t + \epsilon^t , \pi^t - \pi^* \rangle + (\frac{\gamma^t}{2K})^2 \| \hat{v}^t \|^2 \\
            \leqslant & F(\pi^*, Y^t) + \gamma^t \langle b^t + U^t + \epsilon^t , \pi^t - \pi^* \rangle + \frac{(\gamma^t)^2}{2K} \| \hat{v}^t \|^2 .
        \end{aligned}
    \end{equation}
    to derive the final inequality, we use the Definition \ref{monotone} of globally neutrally stable policy profiles. By \eqref{definitionofhatv}, we can get the upper bound of $\| \hat{v}^t \|^2$
    \begin{equation}
        \| \hat{v}^t \|^2 = \sum_{i \in N} \| \frac{|S|(|A_i|-1)}{\delta^t} \hat{V}^t_i \cdot F_i z^t_i \|^2 \leqslant \frac{1}{(\delta^t)^2} \sum_{i \in N} |S|(|A_i|-1) \max_{z_i^t \in \mathbb{S}_i} \| F_i z_i^t \|^2.
    \end{equation}
    As $\mathbb{S}_i$ is a bounded closed set, $\max_{z_i^t \in \mathbb{S}_i} \| F_i z_i^t \|^2 < \infty$, thus there exists a constant $U > 0$ such that 
    \begin{equation}
        \| \hat{v}^t \|^2 \leqslant (\frac{U}{\delta^t})^2.
    \end{equation}
    \par Now, conditioning on the history $\mathcal{F}^t$ of $\pi^t$ and taking expectations, we obtain
    \begin{equation}
        \begin{aligned}
            \mathbb{E}[F(\pi^*, Y^{t+1})| \mathcal{F}^t] \leqslant F(\pi^*, Y^t) + \gamma^t\mathbb{E}[\langle b^t + U^t + \epsilon^t , \pi^t - \pi^* \rangle | \mathcal{F}^t] + \frac{U^2}{2K}(\frac{\gamma^t}{\delta^t})^2
        \end{aligned}
    \end{equation}
    Because $\pi^t$ is $\mathcal{F}^t$-maesurable, so
    \begin{equation}
        \mathbb{E}[\langle U^t , \pi^t - \pi^* \rangle | \mathcal{F}^t] =  \langle \mathbb{E}[U^t| \mathcal{F}^t], \pi^t - \pi^* \rangle = 0.
    \end{equation}
    By Theorem \ref{Lipschitzofgradient} and Lemma \ref{estamitorproperty}, we have
    \begin{equation}
        \begin{aligned}
            \| b^t \| =& \sqrt{ \sum_{i \in N} \| F_i \nabla_{i} V_i^{\delta^t}(x^t_{\delta^t})- \nabla_{i} V_i(\pi^t_i, \pi^t_{-i})) \|^2 } \\
            \leqslant& \sqrt{ \sum_{i \in N} \| F_i \|^2 \bigg( \|\nabla_{i} V_i^{\delta^t}(x^t_{\delta^t})- \nabla_{i} V_i(x^t_{\delta^t})\|^2 + \| \nabla_{i} V_i(x^t_{\delta^t})- \nabla_{i} V_i(x^t) \|^2  \bigg) } \\
            =& \mathcal{O}(\delta^t).
        \end{aligned}
    \end{equation}
    By Lemma \ref{differenceinvalue}, we have
    \begin{equation}
        \begin{aligned}
            \| \epsilon^t \| =& \sqrt{ \sum_{i \in N} \| F_i ( \frac{|S|(|A_i|-1)}{\delta^t} \hat{V}^t_i \cdot z^t_i - \frac{|S|(|A_i|-1)}{\delta^t} V_i(\hat{x}^t) \cdot z^t_i) \|^2  } \\
            \leqslant& \frac{1}{\delta^t} \sqrt{ \sum_{i \in N} \| F_i \|^2 |S|^2(|A_i|-1)^2 (\hat{V}^t_i - V_i(\hat{x}^t))^2 } \\
            = & \mathcal{O}(\frac{e^{- \frac{T^t}{\tau}}}{\delta^t}).
        \end{aligned}
    \end{equation}
    So, there exists some $B_1, B_2 > 0$, such that
    \begin{equation}
        \langle b^t + \epsilon^t , \pi^t - \pi^* \rangle \leqslant B_1 \delta^t + B_2 \frac{e^{- \frac{T^t}{\tau}}}{\delta^t}.
    \end{equation}
    As a result, we have
    \begin{equation}
        \begin{aligned}
            \mathbb{E}[F(\pi^*, Y^{t+1})| \mathcal{F}^t] \leqslant F(\pi^*, Y^t) + B_1 \delta^t \gamma^t + B_2 \frac{e^{- \frac{T^t}{\tau}}}{\delta^t} \gamma^t + \frac{U^2}{2K}(\frac{\gamma^t}{\delta^t})^2.
        \end{aligned}
    \end{equation}
    Letting $E^t = F(\pi^*, Y^{t+1}) + \sum_{k = t}^{\infty} [B_1 \delta^t \gamma^t + B_2 \frac{e^{- \frac{T^t}{\tau}}}{\delta^t} \gamma^t + \frac{U^2}{2K}(\frac{\gamma^t}{\delta^t})^2]$, we have
    \begin{equation}
        \begin{aligned}
            \mathbb{E} [E^{t+1} | \mathcal{F}^t] = & \mathbb{E}[F(\pi^*, Y^{t+1})| \mathcal{F}^t] + \sum_{k = t+1}^{\infty} [B_1 \delta^t \gamma^t + B_2 \frac{e^{- \frac{T^t}{\tau}}}{\delta^t} \gamma^t + \frac{U^2}{2K}(\frac{\gamma^t}{\delta^t})^2] \\
            \leqslant & F(\pi^*, Y^{t}) + \sum_{k = t}^{\infty} [B_1 \delta^t \gamma^t + B_2 \frac{e^{- \frac{T^t}{\tau}}}{\delta^t} \gamma^t + \frac{U^2}{2K}(\frac{\gamma^t}{\delta^t})^2] = E^t,
        \end{aligned}
    \end{equation}
    which means that $E^t$ is a $\mathcal{F}^t-$adapted supermartingale. Since as the assumptions of Theorem \ref{convergencetheorem} holds, it follows that
    \begin{equation}
        \begin{aligned}
            \mathbb{E}[E^t] = \mathbb{E} [\mathbb{E}[E^t|\mathcal{F}^{t-1}] ] \leqslant \mathbb{E}[E^{t-1}] \leqslant \cdots \leqslant \mathbb{E}[E^{1}] < \infty,
        \end{aligned}
    \end{equation}
    i.e., $E^t$ is uniformly bounded in $L^1$. Thus, by Doob's convergence theorem for supermartingales, it follows that $E^t$ converges (a.s.) to some finite random variable $E^{\infty}$. As $\sum_{k = t}^{\infty} [B_1 \delta^t \gamma^t + B_2 \frac{e^{- \frac{T^t}{\tau}}}{\delta^t} \gamma^t + \frac{U^2}{2K}(\frac{\gamma^t}{\delta^t})^2]$ converges (a.s.) to $0$, so $F(\pi^*, Y^{t+1})$ converges (a.s.) to some finite random variable $F^{\infty}$.
\end{proof}

\begin{lemma}\label{step2}
    Suppose that the assumptions of Theorem \ref{convergencetheorem} hold. With probability $1$, there exists a subsequence $\pi^{t_k}$ which converges to some Nash equilibrium.
\end{lemma}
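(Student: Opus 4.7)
The plan is to revisit the one-step Fenchel coupling inequality derived in the proof of Lemma \ref{step1}, but this time retain the primary inner-product term $\gamma^t \langle v(\pi^t), \pi^t - \pi^* \rangle$ rather than discarding it via neutral stability. Taking $\pi^*$ to be a globally variationally stable Nash equilibrium (whose existence is part of the hypothesis of Theorem \ref{convergencetheorem}), I would telescope the bound from $t=0$ to $T$ to obtain
\begin{equation*}
F(\pi^*, Y^{T+1}) \leq F(\pi^*, Y^0) + \sum_{t=0}^{T} \gamma^t \langle v(\pi^t), \pi^t - \pi^* \rangle + \sum_{t=0}^{T} \gamma^t \langle b^t + U^t + \epsilon^t, \pi^t - \pi^* \rangle + \frac{1}{2K} \sum_{t=0}^{T} (\gamma^t)^2 \|\hat{v}^t\|^2.
\end{equation*}

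Next I would show that every term on the right-hand side other than the $v$-inner-product sum remains almost surely bounded as $T\to\infty$. The bias bounds $\|b^t\| = \mathcal{O}(\delta^t)$ and $\|\epsilon^t\| = \mathcal{O}(e^{-T^t/\tau}/\delta^t)$ already established in the proof of Lemma \ref{step1}, together with compactness of $\Pi$ (which bounds $\|\pi^t-\pi^*\|$), make $\sum_t \gamma^t \langle b^t + \epsilon^t, \pi^t - \pi^* \rangle$ absolutely convergent under the parameter assumptions $\sum_t \gamma^t \delta^t < \infty$ and $\sum_t (\gamma^t/\delta^t) e^{-T^t/\tau} < \infty$. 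The quadratic term is finite since $\|\hat v^t\|^2 \leq (U/\delta^t)^2$ and $\sum_t(\gamma^t/\delta^t)^2 < \infty$. For the noise term, $M^T = \sum_{t=0}^T \gamma^t \langle U^t, \pi^t - \pi^* \rangle$ is a zero-mean $\mathcal{F}^t$-adapted martingale whose conditional second moments satisfy $\sum_t (\gamma^t)^2 \mathbb{E}[\|U^t\|^2 \mid \mathcal{F}^t] \leq c \sum_t(\gamma^t/\delta^t)^2 < \infty$, so it converges almost surely by the $L^2$ martingale convergence theorem.

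Combining these bounds with $F(\pi^*, Y^{T+1}) \geq 0$ and the sign condition $\langle v(\pi^t), \pi^t - \pi^* \rangle \leq 0$ from global neutral stability, I would conclude that
\begin{equation*}
\sum_{t=0}^{\infty} \gamma^t \bigl(-\langle v(\pi^t), \pi^t - \pi^* \rangle\bigr) < \infty \quad \text{a.s.}
\end{equation*}
Since $\sum_t \gamma^t = \infty$, this forces $\liminf_{t \to \infty} \langle v(\pi^t), \pi^t - \pi^* \rangle = 0$ almost surely. Compactness of $\Pi$ then yields a subsequence $\pi^{t_k} \to \pi^\infty$ with $\langle v(\pi^{t_k}), \pi^{t_k} - \pi^* \rangle \to 0$, and continuity of $v$ (which follows from Theorem \ref{Lipschitzofgradient}) delivers $\langle v(\pi^\infty), \pi^\infty - \pi^* \rangle = 0$. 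Global variational stability of $\pi^*$ then forces $\pi^\infty$ to be a Nash equilibrium.

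The main obstacle I expect is the careful control of the martingale-noise term $M^T$: one must verify that it is genuinely square-integrable with summable quadratic variation from only the step-size and radius conditions, and then bundle this almost-sure statement with the almost-sure convergence of the bias and step-squared series so the exceptional null set remains null. Beyond that, the proof is essentially a Robbins-Siegmund style telescoping argument combined with the compactness of $\Pi$ and the defining property of variational stability.
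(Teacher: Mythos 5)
Your proposal is correct, but it takes a genuinely different route from the paper. The paper argues by contradiction: it assumes that with positive probability $\pi^t$ has no limit point in the set $\Pi^*$ of Nash equilibria, places the tail of the trajectory in a compact set $\mathcal{C}$ disjoint from $\Pi^*$, uses continuity and compactness to extract a uniform bound $\langle v(\pi),\pi-\pi^*\rangle\leqslant -c<0$ on $\mathcal{C}$, and then shows via the telescoped Fenchel inequality, normalized by $\eta^t=\sum_{k\leqslant t}\gamma^k$, that the bias, noise and step-squared contributions are $o(\eta^t)$ (law of large numbers for martingale differences plus Doob's submartingale convergence), so that $F(\pi^*,Y^{t+1})\leqslant F(\pi^*,Y^0)-c\eta^t/2\to-\infty$, contradicting $F\geqslant 0$. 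You instead keep the inner-product term, prove the error sums actually \emph{converge} (not merely grow sublinearly in $\eta^t$), and deduce the Robbins--Siegmund-type statement $\sum_t\gamma^t\bigl(-\langle v(\pi^t),\pi^t-\pi^*\rangle\bigr)<\infty$ a.s., whence $\sum_t\gamma^t=\infty$ forces $\liminf_t\langle v(\pi^t),\pi^t-\pi^*\rangle=0$ and compactness, continuity of $v$, and variational stability of $\pi^*$ identify a subsequential limit as a Nash equilibrium. Your version is arguably cleaner: it avoids the slightly delicate ``tail eventually lies in a compact set disjoint from $\Pi^*$'' step and yields a quantitatively stronger intermediate conclusion, at the cost of needing summability of the error terms rather than just $o(\eta^t)$ growth --- both of which hold under the stated parameter assumptions. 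One caveat you share with the paper's own proof of Lemma \ref{step1}: the bound $\|\epsilon^t\|=\mathcal{O}(e^{-T^t/\tau}/\delta^t)$ is really a bound on the \emph{conditional expectation} of $\hat{V}^t_i-V_i(\hat{x}^t)$ (Lemma \ref{differenceinvalue}), not a pathwise bound; to make the absolute convergence of $\sum_t\gamma^t\langle\epsilon^t,\pi^t-\pi^*\rangle$ rigorous one should split $\epsilon^t$ into its conditional mean (summable by the stated condition) plus a martingale-difference remainder absorbed into your term $M^T$, whose conditional variance is still $\mathcal{O}\bigl((\gamma^t/\delta^t)^2\bigr)$ since the rewards are bounded.
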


\begin{proof}
    \par Suppose the lemma is incorrect, i.e., with a positive probability, the sequence $\pi^t$ has no limit points in the set of Nash equilibria. In this event, and given that the set of Nash equilibria $\Pi^*$ is compact \cite{Rosen}, there exists some compact set $\mathcal{C}$ such that $\mathcal{C} \cap \Pi^* = \emptyset$ and $\pi^t \in \mathcal{C}$ for sufficient large $t$. Moreover, by Definition \ref{monotone} and let $\pi^*$ be the globally variationally stable Nash equilibrium mentions in Theorem \ref{convergencetheorem}, we have $\langle v(\pi), \pi - \pi^* \rangle <0 $, whenever $\pi \in \mathcal{C}$. Thus, by the continuity of $v$ and the compactness of $\Pi^*$ and $\mathcal{C}$, there exists some $c> 0$ such that
    \begin{equation}
        \langle v(\pi), \pi - \pi^* \rangle \leqslant -c, \quad \forall \pi \in \mathcal{C}. 
    \end{equation}
    For the globally variationally stable Nash equilibrium $\pi^* \in \Pi^*$, as is proved in Lemma \ref{step1}, and letting $\eta^t = \sum_{k=0}^{t} \gamma^k$, we have
    \begin{equation}
        \begin{aligned}
            F(\pi^*, Y^{t+1}) \leqslant &  F(\pi^*, Y^t) + \gamma^t \langle v^t + b^t + U^t + \epsilon^t , \pi^t - \pi^* \rangle + (\frac{\gamma^t}{2K})^2 \| \hat{v}^t \|^2 \\
            \leqslant & F(\pi^*, Y^0) - \eta^t \bigg[ 
            c - \frac{\sum_{k=0}^{t} \bigg( \gamma^k \langle b^k + U^k + \epsilon^k , \pi^k - \pi^* \rangle + (\frac{\gamma^k}{2K})^2 \| \hat{v}^k \|^2 \bigg) }{\eta^t}
            \bigg]
        \end{aligned}
    \end{equation}
    By the definition of $U^t$ in \eqref{decomposition}, it is a martingale difference sequence and there exists $\sigma > 0$ such that 
    \begin{equation}
        \| U^{t} \|^2 \leqslant \frac{\sigma^2}{(\delta^t)^2},
    \end{equation}
    hence
    \begin{equation}
        \sum_{t=0}^{\infty}( \gamma^{t+1} \mathbb{E}[ \langle U^{t+1}, \pi^{t+1} - \pi^* \rangle | \mathcal{F}^t] )^2 \leqslant 2 \sigma^2 \sum_{t=0}^{\infty} (\frac{\gamma^t}{\delta^t})^2 < \infty.
    \end{equation}
    Therefore, by the law of large numbers for martingale difference sequences, we have
    \[(\eta^t)^{-1}\sum_{k=1}^{t} \gamma^k \langle U^{k}, \pi^{k} - \pi^* \rangle \]
    converges to $0$ with probability $1$.
    \par Following the similar way, we get
    \[
    (\eta^t)^{-1}\sum_{k=1}^{t} \gamma^k \langle b^k, \pi^{k} - \pi^* \rangle, \quad
    (\eta^t)^{-1}\sum_{k=1}^{t} \gamma^k \langle \epsilon^k, \pi^{k} - \pi^* \rangle
    \]
    converges to $0$ with probability $1$.
    \par Finally, for the term $ \sum_{k=0}^{t} (\frac{\gamma^k}{2K})^2 \| \hat{v}^k \|^2 $, we have
    \begin{equation}
    \begin{aligned}
        \mathbb{E}[ \sum_{k=0}^{t+1} (\frac{\gamma^k}{2K})^2 \| \hat{v}^k \|^2 | \mathcal{F}^t] =& \sum_{k=0}^{t} (\frac{\gamma^k}{2K})^2 \| \hat{v}^k \|^2 + \mathbb{E}[ (\frac{\gamma^{t+1}}{2K})^2 \| \hat{v}^{t+1} \|^2 \mathcal{F}^t ] \\
        \geqslant& \sum_{k=0}^{t} (\frac{\gamma^k}{2K})^2 \| \hat{v}^k \|^2,
    \end{aligned}
    \end{equation}
    i.e., it is a submartingale with respect to $\mathcal{F}^t$. Moreover, as is proved in Lemma \ref{step1}, $ \sum_{k=0}^{t} (\frac{\gamma^k}{2K})^2 \| \hat{v}^k \|^2 $ is uniformly bounded in $L^1$. Thus by Doob's submartingale convergence theorem, $ \sum_{k=0}^{t} (\frac{\gamma^k}{2K})^2 \| \hat{v}^k \|^2 $ converges (a.s.) to some finite random variable. Consequently, we have
    \[
    (\eta^t)^{-1} \sum_{k=0}^{t} (\frac{\gamma^k}{2K})^2 \| \hat{v}^k \|^2 
    \]
    converges to $0$ with probability $0$.
    With all results above, we have $F(\pi^*, Y^{t+1}) \leqslant F(\pi^*, Y^{0}) - c \frac{\eta^t}{2}$ for sufficient large $n$, and hence $F(\pi^*, Y^{t+1}) \to -\infty$, a contradiction to Lemma \ref{step1}!
\end{proof}
\par With the preparations above, now we can prove the Theorem \ref{convergencetheorem}.
\begin{proof}[Proof of Theorem \ref{convergencetheorem}]
\par By Lemma \ref{step2}, there exists (a.s.) a globally neutrally stable Nash equilibrium $\pi^* \in \Pi^*$ such that $Q(Y^{t_k}) \to \pi^*$ for some subsequence $Y^{t_k}$. By the assumption \eqref{reciprocitycondition} of the Fenchel coupling, we have
\begin{equation}
    \lim_{k \to \infty} F(\pi^*, Y^{t_k}) = 0, \quad \text{a.s.}.
\end{equation}
By Lemma \ref{step1}, $\lim_{t \to \infty} F(\pi^*, Y^{t})$ exists almost surely, it follows that
\begin{equation}
    \lim_{t \to \infty} F(\pi^*, Y^{t}) = \lim_{k \to \infty} F(\pi^*, Y^{t_k}) = 0, \quad \text{a.s.}.
\end{equation}
This shows that $\pi^t$ converges to $\pi^*$ with probability $1$ by Lemma \ref{propertyofFenchel}.
    
\end{proof}

\end{document}